
\documentclass[onecolumn]{IEEEtran}
\usepackage{amsmath,amsfonts,amssymb,euscript, graphicx, 
epsfig,
enumerate,float,afterpage, subfigure, ifthen}%

\usepackage{url}
\usepackage{hyperref}

\newtheorem{thm}{Theorem}

\newtheorem{lem}{Lemma}

\newcommand{\expect}[1]{\mathbb{E}\left[#1\right]}

\overrideIEEEmargins

\begin{document}

\title{Repeated Games, Optimal Channel Capture, and Open Problems for Slotted Multiple Access}

\author{Michael J. Neely \\ University of Southern California\\ \url{https://viterbi-web.usc.edu/~mjneely/}
\thanks{Michael J. Neely is a Professor in the Electrical Engineering Department at the University of Southern California. This work was supported by grant NSF SpecEES 1824418.} 
}


\maketitle


\begin{abstract} 
This paper revisits a classical problem
of slotted multiple access with success, idle, and collision events
on each slot.   First, results of a 2-user multiple access game
are reported. The game was conducted at the University of Southern
California over multiple semesters and involved competitions 
between student-designed
algorithms. 
An algorithm called 4-State was a consistent winner.  This algorithm
is analyzed and shown to have an optimal expected score when
competing against an independent version of itself.  The structure
of 4-State motivates exploration of the open question of how
to minimize the expected time to capture the channel for a 
$n$-user situation.  It is assumed that the system delivers perfect
feedback on the number of users who transmitted at the end of each slot. 
An efficient algorithm is developed and conjectured to have an optimal
expected capture time for all positive integers $n$. Optimality is proven in 
the special cases 
$n \in \{1, 2, 3, 4, 6\}$ using a novel analytical technique that introduces 
virtual users with enhanced capabilities. 
\end{abstract} 

\section{Introduction} 

This paper studies simple uncoordinated schemes for human users to 
share a multiple access channel.  Such schemes are useful, for example, 
in internet-of-things situations where distributed users send bursts of data 
and must learn an efficient channel sharing rule based on feedback.  
In this direction, the first thrust of this paper considers a series of 2-user 
slotted multiple access competitions that were conducted at the 
University of Southern California (USC) over multiple semesters. Students were asked to develop their own algorithms 
for choosing to transmit, or not transmit, over 100 slots. All pairs of algorithms competed and the one with the best accumulated score was declared the winner.   Two particular algorithms, called 3-State and 4-State, consistently outperformed the others. 
Analytical properties of these winning algorithms are established. In particular, it is shown that they have a maximum expected score when playing against independent versions of themselves. Properties of these algorithms motivate the second thrust 
of this paper: Investigating the open question of how to minimize the expected time to capture the channel for a $n$-user situation.  This question is of fundamental interest because it explores the learning times and algorithmic protocols required for a collection of indistinguishable users to identify a single user via distributed means.  

Algorithms that minimize the first capture time can be used in a variety of contexts. For example, to maximize throughput, 
an extended algorithm might give the first-capturer unhindered access to the channel for $k$ additional slots (this amortizes the slots that were spent trying to establish the first success).  Alternatively, to provide fairness, 
an algorithm that minimizes the time to the first success given an initial collection of $n$ users might be 
recursively repeated for $n-1$ users, then $n-2$ users, and so on, in order to construct a fair transmission schedule (such as a 
round-robin schedule). 

The paper treats a classical multiple access scenario with success, collision, or idle on every slot.  However, 
for the case with more than two users, it is additionally assumed that the receiver gives feedback on the number of 
transmitters at the end of every slot.  This is more detailed feedback than collision, idle, or success.  The number of 
transmitters can be determined by various physical estimation techniques, such as measuring the combined energy 
in the collisions and/or by using the bit signature technique of \cite{zigzag-katabi} to count the number of signature sequences that
are received. Even with this detailed feedback, the problem of minimizing the expected time to capture the channel in a $n$-user situation  involves a complex and seemingly intractable
decision structure. The paper develops an algorithm that can be implemented for any number of users $n$. The algorithm is 
conjectured to be optimal for all $n$.  A proof of this conjecture for the special cases $n \in \{1, 2, 3, 4, 6\}$ is given.  Remarkably, 
the expected capture time for the case of 3 users is strictly smaller than the expected capture time for the case of 2 users.  In this context, it is better to have 3 users independently competing for access than to have 2 users independently competing for access. 
The proof uses a novel technique that introduces virtual users with enhanced capabilities. This identifies two open questions that we leave to future work: (i) Is the algorithm of this paper optimal for all values of $n$?  (ii)  What algorithms are optimal when more limited feedback information is used? 

This paper focuses on the case of a single channel.  
However, it should be noted that modern wireless  
networks use a broad frequency spectrum that can support multiple channels over orthogonal frequency bands. 
One way to extend the transmission schemes of this paper to a multi-channel environment is to 
implement them separately and independently over each channel.  Is such separate and independent implementation optimal, or can correlating the transmission decisions over each channel provide efficiency gains?  This question is briefly explored at the end of the paper where it is shown that channel independence is optimal for 2 users but suboptimal for 3 users.

\subsection{Decentralized control} 

The problem of distributed minimization of  expected capture time  is conceptually similar to the  
stochastic decision problem for multiple distributed agents
described in  \cite{open-problems-comms}. The work \cite{open-problems-comms}
provides an optimal strategy for $n=3$ agents but leaves the case $n>3$ open.
 To date, there are no known polynomial time solutions for general $n$. 
Optimal distributed strategies for extended problems are developed in \cite{dist-opt-ton},
although complexity can grow exponentially in the number of agents. 
Unlike \cite{open-problems-comms}\cite{dist-opt-ton}, 
the agents in the current paper are not labeled, so the first agent cannot distinguish itself as ``wireless user 1'' and the second cannot distinguish itself as ``wireless user 2.'' Further, the problem of minimizing the expected capture time  that is treated in this paper (seemingly) suffers 
from an even worse complexity explosion 
than the problems in \cite{open-problems-comms}\cite{dist-opt-ton}.   Indeed,  while \cite{dist-opt-ton} establishes 
an optimal decision rule with complexity that grows with $n$, the optimal decision structure for the
current paper is unknown and the author is unable to establish optimality even for the case $n=5$. An optimality argument is given for  
 $n \in \{1, 2, 3, 4, 6\}$ using a novel  technique that introduces virtual users.  
 
 The problem of this paper is similar to 
 the class of \emph{sequential team problems} treated in \cite{ashutosh-common-knowledge} 
 using concepts of \emph{common information} (see also 
 standard form representations and discussions of nonclassical information patterns in \cite{standard-form-nonclassical-info}). 
 In the current paper, common information arises from the feedback that is commonly given to all users.  It may prove fruitful 
 to cast the current paper in the framework of \cite{ashutosh-common-knowledge}\cite{standard-form-nonclassical-info}. However, 
 that framework  
 does not necessarily provide low complexity solutions. Further, the decision structure under that framework 
 is very different from the simple policy structure considered and conjectured to be optimal in the current paper. 
 An explicit connection to \cite{ashutosh-common-knowledge}\cite{standard-form-nonclassical-info} is left for future work. 

The works \cite{pmlr-v125-bubeck20a}\cite{pmlr-v125-bubeck20c}\cite{6763073}
treat distributed channel selection for multi-access problems 
using a multi-armed bandit 
framework and using the criterion of asymptotic regret.  Like the current paper, 
\cite{pmlr-v125-bubeck20a}\cite{pmlr-v125-bubeck20c}\cite{6763073} assume
the users are not labeled, cannot distinguish themselves, and experience collisions
if multiple users pick the same channel. The work 
\cite{pmlr-v125-bubeck20a} treats two users and three stochastic channels, 
\cite{6763073} treats multiple users, 
\cite{pmlr-v125-bubeck20c} treats multiple users in a non-stochastic setting.
The channel model of the current paper is simpler than
\cite{pmlr-v125-bubeck20a}\cite{pmlr-v125-bubeck20c}\cite{6763073} and can easily be shown to yield a constant regret on expected throughput. 
However, rather than seeking to minimize regret in an asymptotic sense, 
this paper seeks a more stringent form of optimality:   Maximizing throughput
over a finite horizon, and minimizing expected capture time over an infinite horizon. 
The resulting control decision structure is different from \cite{pmlr-v125-bubeck20a}\cite{pmlr-v125-bubeck20c}\cite{6763073} and gives rise to a number of open questions that we partially resolve in this work.

\subsection{Distributed MAC and repeated games} 

For randomly arriving data, the classical slotted Aloha protocol is well known to achieve stability with throughput close to $1/e$. 
Splitting and tree-based algorithms that are optimized for Poisson arrivals are treated in \cite{bertsekas-data-nets}\cite{MoH85}\cite{TsM80}\cite{Hay76}\cite{Cap77}\cite{TsM78} and shown to increase throughput. Algorithms of this type that achieve throughput of 0.4878
are developed in \cite{MoH85}\cite{TsM80}; The maximum stable throughput under these assumptions is unknown but an upper bound of 0.587 is developed
in \cite{MiT81}.   The current paper treats
a fixed number of users with an infinite number of packets to send, rather than randomly arriving users.  Thus, there is no stability concern and the corresponding distributed implementation issues are different. 

The multi-access game treated in thrust 1 of this paper is a \emph{repeated game}
and is 
inspired by the repeated prisoner dilemma games  
in \cite{axelrod-prisoner80}\cite{axelrod-prisoner81} (see also, for example, \cite{prisoner-dilemma-book}\cite{game-theory-book}\cite{AGT}).    It has been observed that 
competitions involving 
repeated prisoner dilemma games are often 
won by the simple \emph{Tit-for-Tat} strategy that mirrors 
the opponent decision \cite{axelrod-prisoner80}\cite{axelrod-prisoner81}. 
This is not the case for the
multi-access game treated in the current paper.  
While a Tit-for-Tat strategy can be used in the multi-access 
competitions, and in several semesters students submitted such 
algorithms, these algorithms did not win because: (i) Tit-for-Tat is deterministic and so it  
necessarily scores zero points when competing 
against itself; (ii) Tit-for-Tat performs poorly when it competes against 
an algorithm that never transmits.  An algorithm called 4-State consistently wins the competitions.   This algorithm has an initial randomization phase to capture the channel. It also has a punishing mechanism that seeks to drive the opposing algorithm to fairly take turns.  It is shown that 4-State achieves an optimal expected score when competing against an independent version of itself.

\subsection{Feedback details and physical layer} \label{section:feedback}

This paper assumes slotted time with  fixed length packets. Let $F[t]$ be the number of users who transmit on slot $t \in \{1, 2, 3, \ldots\}$.  A success occurs if and only if $F[t]=1$.  For the $n$-user situation it holds that $F[t] \in \{0, 1, 2, \ldots, n\}$ and 
\begin{itemize} 
\item $F[t]=0 \iff $ idle. 
\item $F[t]=1 \iff $ success.
\item $F[t] \geq 2 \iff $ collision. 
\end{itemize} 
The value of $F[t]$ is assumed to be 
given as feedback at the end of slot $t$. If $n=2$ then $F[t]$ is equivalent to the idle, success, collision feedback
of classical slotted
Aloha. If $n>2$ then the $F[t]$ feedback is more detailed. It is assumed that $F[t]$ can be inferred by the receiver even in the case of a collision. This can be done, for example, by measuring the energy in the combined interfering signals and assuming that this energy is proportional to $F[t]$.  Alternatively, the value of $F[t]$ can be inferred by installing a short signature bit pattern in every packet transmission and using a filter to count the number of patterns that arise. This method for counting the number of transmitters is used in the 
ZigZag multiple access scheme of \cite{zigzag-katabi}, which uses timing misalignments to perform interference stripping. A soft decision decoding version called SigSag is treated in \cite{sigsag-journal}. 
Like \cite{zigzag-katabi}\cite{sigsag-journal},  the current paper 
assumes $F[t]$ can be accurately counted. However, it does not 
consider interference decoding and treats two or more transmissions on the same slot as a collision from which no information is obtained (other than the number of packets that collided). It is worth noting that any realistic scheme for reporting $F[t]$ will have some probability of reporting error.  
Nevertheless, for simplicity,  it is assumed that $F[t]$ is reported without error. 

\subsection{Outline}

Sections \ref{section:game} and \ref{section:game-analysis} 
describe the 2-player repeated game and establish  
optimality  of 4-State and 3-State. 
Section \ref{section:multi-user} considers 
the channel capture problem for $n\geq 2$ users and develops an 
algorithm that is conjectured to be optimal for all  $n$. Section \ref{section:converse} 
proves a matching converse for   $n \in \{1, 2, 3, 4, 6\}$.  Section \ref{section:extension}   considers extensions to multi-channel problems. 

\section{2-player MAC game} \label{section:game} 

Consider the following 2-player multiple access (MAC) game: Fix $T$ as a positive integer.  The game
lasts over $T$ consecutive slots.  Two players compete to send fixed-length packets over a single channel during this time.  A single packet transmission takes one time slot. On each slot $t \in \{1, 2, \ldots, T\}$, each player makes a decision about whether or 
not to send a packet.  For each player $i \in \{1, 2\}$ and each slot $t \in \{1, 2, \ldots, T\}$, define 
 $X_i[t]  \in \{0,1\}$ as the binary decision variable that is $1$ if player $i$ decides 
to send on slot $t$, and $0$ else.  There are three possible outcomes on slot $t$: 
\begin{itemize} 
\item Idle: Nobody transmits, so  $(X_1[t], X_2[t])=(0,0)$. 

\item Success: Exactly one player transmits, so $(X_1[t], X_2[t]) \in \{(0,1), (1,0)\}$. 
\item Collision: Both players transmit, so $(X_1[t], X_2[t]) = (1,1)$. 
\end{itemize} 

Player $i \in \{1,2\}$ scores a point on slot $t$ if and only if it is the only player to transmit on that slot.  Let $F[t] \in \{0,1,2\}$ denote the number of transmissions on slot $t$, which is
equivalent to idle/success/collision feedback.    
Let $(S_1, S_2)$ be the score of each player at the end of the game: 
\begin{align*}
S_1 &= \mbox{$\sum_{t=1}^T X_1[t](1-X_2[t])$}\\
S_2 &= \mbox{$\sum_{t=1}^T X_2[t](1-X_1[t])$}
\end{align*}

Players 1 and 2 
know the value of $T$ and the idle/success/collision structure of the game. 
However,  they cannot distinguish themselves as Player 1 or Player 2.    Therefore, even if both players have the desire to fairly share the slots, so that one player transmits
only on odd slots and the other transmits only on evens, there is no a-priori way to decide who takes the odds
and who takes the evens. Each player knows its own decision on slot $t$. From the feedback $F[t]$ it can infer the decision of its opponent. For each player and each slot $t\geq 2$ define $H_{self}[t]$ and $H_{opponent}[t]$ as the history of decisions up to but not including  slot $t$ as seen from the perspective of that player.  For example Player 1 has
\begin{align*}
H_{self}[t] &= (X_1[1], X_1[2] , \ldots, X_1[t-1])\\
H_{opponent}[t] &=(X_2[1], X_2[2], \ldots, X_2[t-1])
\end{align*}
while these vectors are swapped for Player 2.

\subsection{Random and deterministic algorithms} 

Algorithms are allowed to make any desired decisions based on the feedback, including 
randomized decisions.  A general algorithm can be mathematically  represented by a 
sequence of Borel measurable functions $f_1, f_2, \ldots, f_T$ such that
\begin{align*}
&f_1:[0,1)\rightarrow \{0,1\} \\
&f_t:[0,1)\times \{0,1\}^{t-1} \times \{0,1\}^{t-1}\rightarrow \{0,1\} \quad \forall t \in \{2, \ldots, T\}
\end{align*}
where the decisions $X[t]$ are given by 
\begin{align*}
X[1] &= f_1(U)\\
X[t] &=f_t(U, H_{self}[t], H_{opponent}[t])  \quad \forall t \in \{2, \ldots, T\}
\end{align*}
where $U$ is a randomization variable that is uniformly distributed over $[0,1)$. The randomness of $U$ can be used to facilitate randomized decisions.\footnote{A random variable $U \sim U[0,1)$ has binary expansion $U = \sum_{m=1}^{\infty} B_m 2^{-m}$ with $\{B_m\}_{m=1}^{\infty}$ i.i.d. equally likely bits that can be used to make sequences of randomized decisions.}   Players are assumed to use independent randomization variables. In particular,  if Players 1 and 2 implement independent versions of the same algorithm, they use the same functions $f_1, \ldots, f_T$ but they use independent random variables $U_1 \sim U[0,1)$ and $U_2 \sim U[0,1)$.

A \emph{deterministic algorithm} is one that contains no randomization calls.  Such an algorithm can be characterized by a sequence of 
functions $\{g_t\}_{t=1}^T$ that only use $H_{self}[t]$ and $H_{opponent}[t]$ 
as input (with no randomization variable)
\begin{align*}
&g_1 \in \{0,1\} \\
&g_t:\{0,1\}^{t-1} \times \{0,1\}^{t-1} \rightarrow \{0,1\}\quad \forall t \in \{2, \ldots, T\} 
\end{align*}
so  $X[1]=g_1$ is the deterministic  decision on slot $t=1$ and 
\begin{align} \label{eq:det-alg} 
X[t] &=g_t(H_{self}[t], H_{opponent}[t]) \quad \forall t \in \{2, \ldots, T\} 
\end{align}

\begin{lem}  A deterministic algorithm scores zero points against itself.
\end{lem}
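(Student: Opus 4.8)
The plan is to show by induction on the slot index $t$ that the two players make identical decisions on every slot, and then to observe that identical decisions can never earn a point. Because both players run independent copies of the same \emph{deterministic} algorithm, they share the functions $g_1,\ldots,g_T$ and make no randomization calls; hence the only way their decisions could ever differ is through a difference in the observed history pair $(H_{self}[t],H_{opponent}[t])$ that is fed to $g_t$.

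For the base case, $X_1[1]=g_1=X_2[1]$ directly from \eqref{eq:det-alg} (the $t=1$ case), and both are a fixed bit in $\{0,1\}$. For the inductive step, assume $X_1[s]=X_2[s]$ for all $s\in\{1,\ldots,t-1\}$. The key observation is that under this hypothesis Player 1's input pair $\bigl((X_1[1],\ldots,X_1[t-1]),\,(X_2[1],\ldots,X_2[t-1])\bigr)$ equals Player 2's input pair $\bigl((X_2[1],\ldots,X_2[t-1]),\,(X_1[1],\ldots,X_1[t-1])\bigr)$: the swap of the ``self'' and ``opponent'' roles between the two perspectives is invisible precisely because the two coordinate vectors are equal. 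Applying the common function $g_t$ to the common input gives $X_1[t]=X_2[t]$, which closes the induction.

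Once $X_1[t]=X_2[t]$ for all $t$, every slot is either an idle slot $(0,0)$ or a collision slot $(1,1)$, and in both cases $X_i[t]\bigl(1-X_{3-i}[t]\bigr)=X_i[t]\bigl(1-X_i[t]\bigr)=0$ since $X_i[t]\in\{0,1\}$. Summing over $t\in\{1,\ldots,T\}$ yields $S_1=S_2=0$.

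The argument is essentially immediate; the one point I would state carefully is the history-symmetry observation in the inductive step, since $H_{self}$ and $H_{opponent}$ are defined with opposite roles for the two players, and one must verify that this asymmetry collapses exactly when the two decision streams have agreed on all earlier slots. There is no real obstacle beyond making that bookkeeping explicit.
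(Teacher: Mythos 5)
Your proof is correct and follows essentially the same route as the paper's: a strong induction showing $X_1[t]=X_2[t]$ for every slot, with the key step being that the swap of the self/opponent histories between the two players' perspectives is invisible once the decision streams agree. The paper phrases the induction hypothesis as $H_{1,self}[t]=H_{2,self}[t]$ and invokes the identities $H_{2,self}[t]=H_{1,opponent}[t]$ and $H_{2,opponent}[t]=H_{1,self}[t]$, which is exactly the bookkeeping you flag at the end.
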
 
\begin{proof} 
Let $g_1, \ldots, g_T$ be the functions associated with a deterministic algorithm that is used by both Player 1 and Player 2.  On the first slot we have 
$X_1[1]= g_1$ and $X_2[1] = g_1$ so both players make the same decision
(resulting in either a collision or idle).  
For $t \in \{2, \ldots, T\}$ define $H_{1,self}[t]$, $H_{1,opponent}[t]$ and $H_{2,self}[t]$, $H_{2,opponent}[t]$ as the history kept by Players 1 and 2, respectively.  By definition 
\begin{align}
H_{2,self}[t] &= H_{1,opponent}[t] \quad \forall t \label{eq:self1}\\
H_{2,opponent}[t] &= H_{1,self}[t] \quad \forall t \label{eq:self2} 
\end{align}
Since $X_1[1]=X_2[1]$ the following 
holds for $t=2$: 
\begin{equation}\label{eq:induction-hypothesis} 
H_{1,self}[t] =H_{2,self}[t] 
\end{equation} 
We proceed by induction: Fix $t \in \{2, \ldots, T\}$ and assume 
\eqref{eq:induction-hypothesis} holds for slot $t$.  
We show no player scores on slot $t$ and that if $t<T$ then 
\eqref{eq:induction-hypothesis} holds for $t+1$.   On slot $t$ we have 
\begin{align*}
X_1[t] &= g_t(H_{1,self}[t], H_{1,opponent}[t])\\
X_2[t] &= g_t(H_{2,self}[t], H_{2,opponent}[t])\\
&= g_t(H_{1,self}[t], H_{1,opponent}[t])
\end{align*}
where the last equality holds by \eqref{eq:induction-hypothesis} and \eqref{eq:self1}-\eqref{eq:self2}. Thus, $X_1[t]=X_2[t]$ and no player scores on slot $t$.  If $t<T$ then 
\begin{align*}
H_{1,self}[t+1] &= (H_{1,self}[t]; X_1[t])\\
H_{2,self}[t+1] &= (H_{2,self}[t]; X_2[t])\\
&= (H_{1,self}[t], X_1[t])
\end{align*}
where the final equality uses \eqref{eq:induction-hypothesis} and the fact $X_1[t]=X_2[t]$. Thus, $H_{1,self}[t+1]=H_{2,self}[t+1]$ so   \eqref{eq:induction-hypothesis} holds for $t+1$. 
\end{proof} 

\subsection{Tournament structure} 

Competitions of these 2-player MAC games were conducted over 7 semesters amongst students in the EE 550 Data Networks class at the University of Southern California.    If there were $k$ algorithms competing in a given semester, then all algorithms $i \in \{1, \ldots, k\}$  were paired against all other algorithms $j \in \{1, \ldots, k\}$.  This included a pairing $(i,i)$ where algorithm $i$ plays against an independent version of itself.   For each algorithm pair $(i,j)$, the scores of 1000 independent simulations of 100-slot games were averaged to provide an estimate of the expected score $(\expect{S_i}, \expect{S_j})$ associated with algorithm $i$ playing algorithm $j$. The total score of an algorithm is the sum of its scores accumulated 
over all other algorithms that it played (including itself).  The algorithm with the largest accumulated score was declared the winner of the competition. 

The competing algorithms included algorithms that students designed, an instructor-designed algorithm called 4-state, and two special algorithms called AlwaysTransmit and NeverTransmit. 
AlwaysTransmit transmits on every slot regardless of history.  No opponent can score against AlwaysTransmit. This algorithm is maximally greedy and was entered into the competition in order to view its accumulated score in comparison with the other algorithms.  In contrast, NeverTransmit never transmits and never scores any points. It was entered into the competition to test the ability of other algorithms to adapt to the situation where they are the only ones requesting channel access.

In addition to the overall score in the competition, the quality of an algorithm can be understood in terms of figures of merit $\alpha$ and $\beta$ defined below: 

\begin{itemize} 
\item Self-competition score $\alpha$: This is the expected score $\expect{S_1}$ when an algorithm plays an independent copy of itself.  This figure of merit is  useful  because a good MAC algorithm will be used by others and hence must perform well against itself. This is also important in the tournament because each algorithm plays against itself. 

\item No-competition score $\beta$: This is the expected score $\expect{S_1}$ when an algorithm plays NeverTransmit. This figure of merit is useful because a good MAC algorithm should adapt when nobody else is using the channel.  It is also important in the tournament because algorithms can earn close to 100 points on games when they play NeverTransmit (provided they can quickly guess when they are playing NeverTransmit). 
\end{itemize} 

\subsection{Special algorithms} 

The following algorithms are of key interest: 

\begin{enumerate} 
\item Tit-for-Tat-0 (TFT-0): This is a deterministic policy that operates as follows:
\begin{itemize} 
\item $X[1]=0$.
\item For $t \in \{2, 3, \ldots, T\}$:  $X[t] = X_{opponent}[t-1]$. 
\end{itemize} 
A variation called Tit-for-Tat-1 (TFT-1) differs only by having $X[1]=1$.  Both Tit-for-Tat-0 and Tit-for-Tat-1 mirror the opponent decisions with one slot delay. They give a free slot immediately after the opponent gives a free slot.  This is similar in spirit to the (different) Tit-for-Tat algorithm considered for prisoner dilemma games \cite{axelrod-prisoner80}\cite{axelrod-prisoner81}.  It can be shown that neither version ever looses a game by more than 1 point (regardless of the opponent). However, both Tit-for-Tat-0 and Tit-for-Tat-1 are deterministic and so they both have a Self-Competition score of 0.  They also have poor No-Competition scores (0 and 1 for Tit-for-Tat-0 and Tit-for-Tat-1, respectively).

\item 3-State:  This algorithm is given by the diagram in Fig. \ref{fig:3-State}.  The initial state is state 1.  The idea behind 3-State is to start by randomly transmitting  until the first success, then move to a turn-based policy that oscillates between states 2 and 3.  On its turn (state 3) it repeatedly transmits until it scores.  This is a punishing mechanism designed to force the opponent to acknowledge its turn. Like Tit-for-Tat, it is not difficult to show   that 3-State cannot loose a game by more than 1 point. 

\item 4-State: This algorithm is given by the diagram in  Fig. \ref{fig:4-State}. The idea is to notice that 3-State wastes half of the slots when playing NeverTransmit. The 4-State algorithm is the same as 3-State, with the exception that it attempts to detect whether or not it is playing NeverTransmit.     If the opponent does not take its turn in state 2, the algorithm moves to state 4 and repeatedly transmits on all remaining slots (unless there is a collision). It is not difficult to show that 4-State retains the property that it cannot loose a game by more than 1 point. 
\end{enumerate}

\begin{figure}[htbp]
   \centering
   \includegraphics[width=3in]{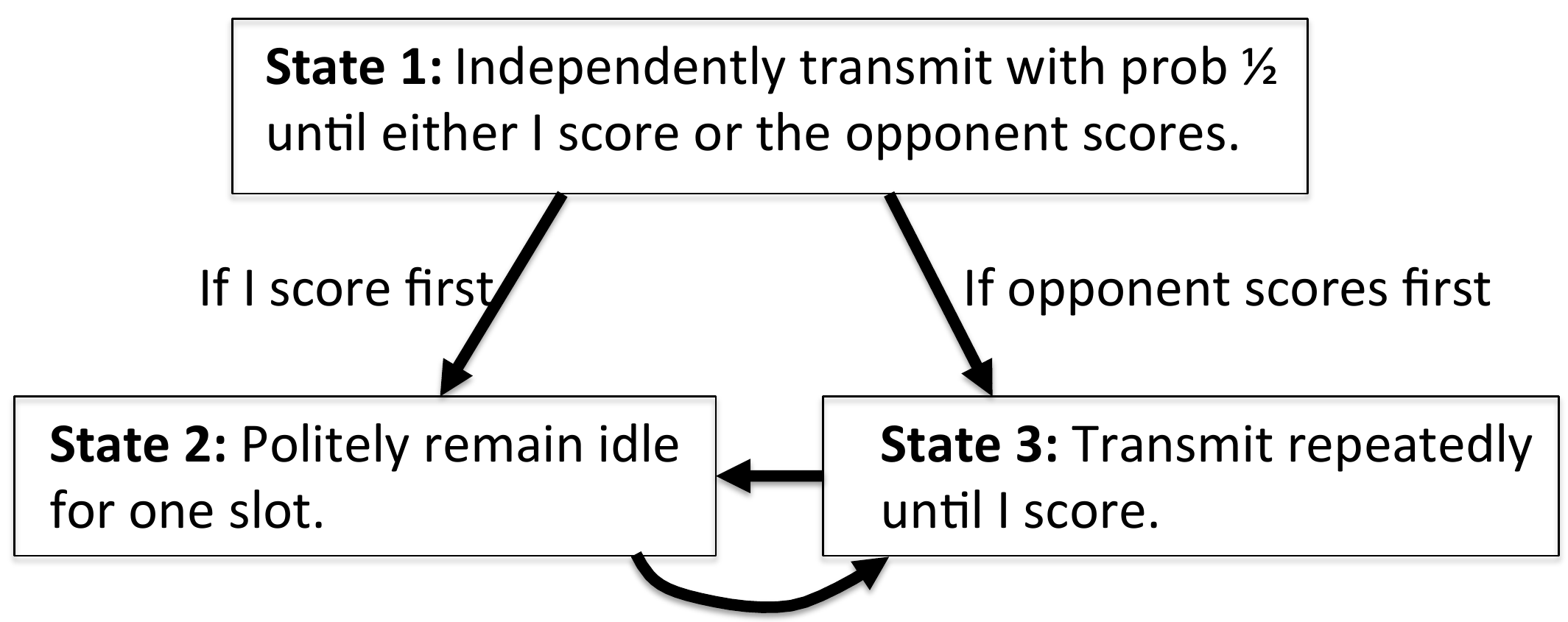} 
   \caption{Algorithm 3-State.  The algorithm starts in state 1.}
   \label{fig:3-State}
\end{figure}

\begin{figure}[htbp]
   \centering
   \includegraphics[width=3in]{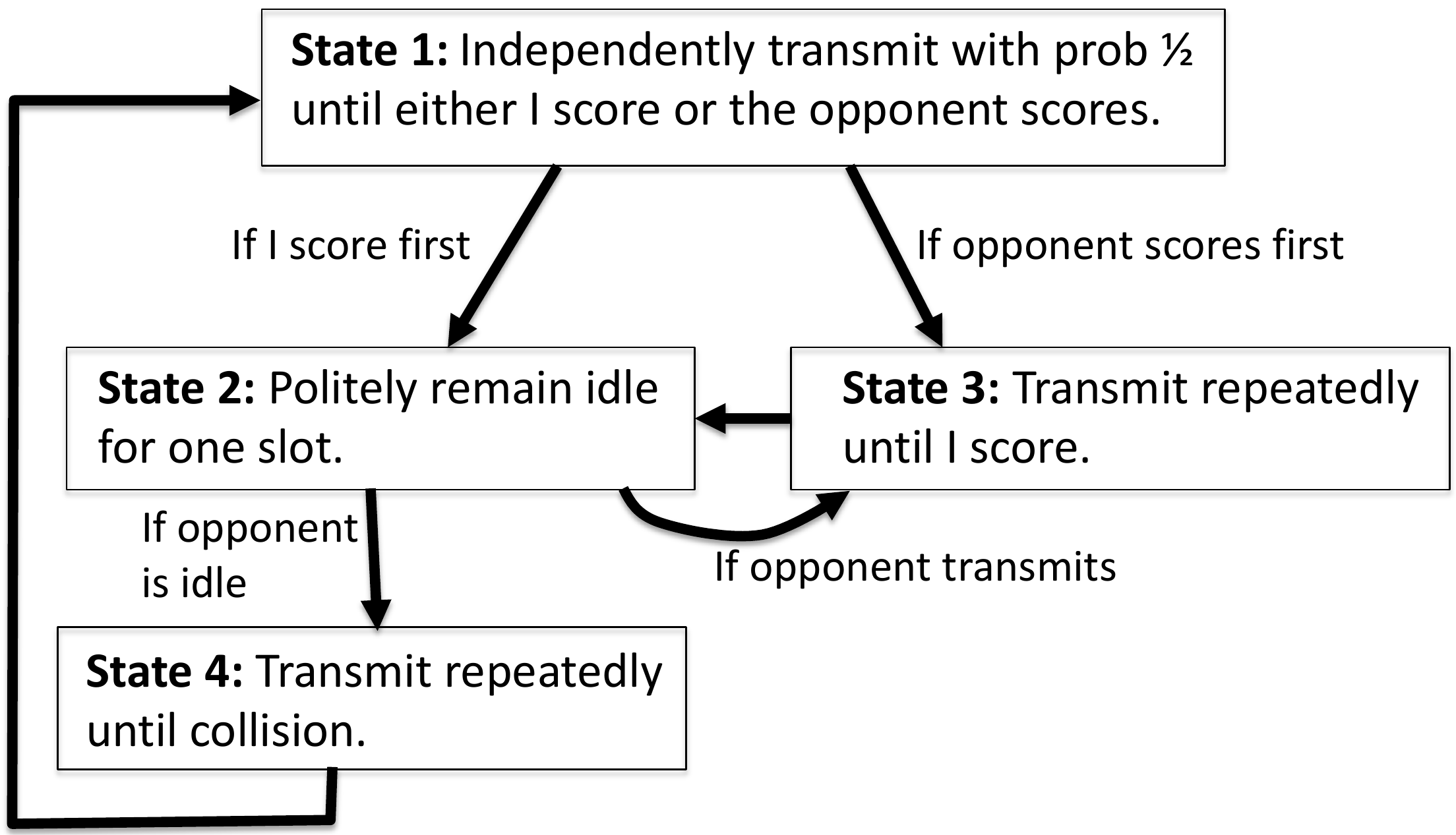} 
   \caption{Algorithm 4-State. The algorithm starts in state 1.}
   \label{fig:4-State}
\end{figure}

\subsection{Performance over multiple competitions} 

\begin{figure*}[htbp]
\centering
\begin{tabular}{|c||c|c|c|c|}
\hline
& 4-State & Second Place & AlwaysTransmit& AvgAlg  \\ \hline \hline
Fall 2021 (10 algs)& 32.46 & 26.02& 22.90 & 18.14 \\ \hline
Fall 2020 (25 algs)& 23.92 & 22.82 &12.36  & 12.10 \\ \hline
Fall 2019 (19 algs) & 30.55& 30.07  & 18.32& 16.25  \\ \hline
Spring 2018 (35 algs) & 56.31& 53.62& 25.55& 33.71 \\ \hline
Fall 2018 (27 algs) & 32.44 &29.63 & 15.42 & 17.11 \\ \hline
Spring 2017 (21 algs) & 20.44& 17.68 &8.00 & 10.88 \\ \hline
Fall 2016 (14 algs) & 20.22 & 17.53 & 11.22 & 10.22 \\ \hline
\end{tabular} 
\caption{Results for 7 semesters of competitions. The 4-State algorithm came in first place in each semester.  The second place score (which was from a student-designed algorithm) is also shown. AvAlg represents the average score over all algorithms for that semester.}
\label{fig:many-semesters} 
\end{figure*}

Fig. \ref{fig:many-semesters} provides results for 7 semesters of competitions.  Each competition 
included the 4-State, AlwaysTransmit, and NeverTransmit algorithms together with 
algorithms developed by students. 
 The number of competing algorithms varies semester by semester as shown in the table.  
 Every semester, all algorithms of the competition played against 
 all  others (including independent versions of themselves).   Each algorithm pair is simulated for 1000 independent runs of 100-slot games. Fig. \ref{fig:many-semesters}  records the average points per game.

 The 4-State algorithm came in first place on each semester.  Fig. \ref{fig:many-semesters}  also provides the average points per game of the second place algorithm, the AlwaysTransmit algorithm, and the average over all algorithms in that semester.  The 4-State algorithm consistently scores significantly more than AlwaysTransmit.  The average scores over all algorithms tended to be close to AlwaysTransmit because many students used algorithms that emphasized greedy decisions.

\subsection{Fall 2021 competition} 

\begin{figure*}[htbp]
   \centering
   \includegraphics[width=5.0in]{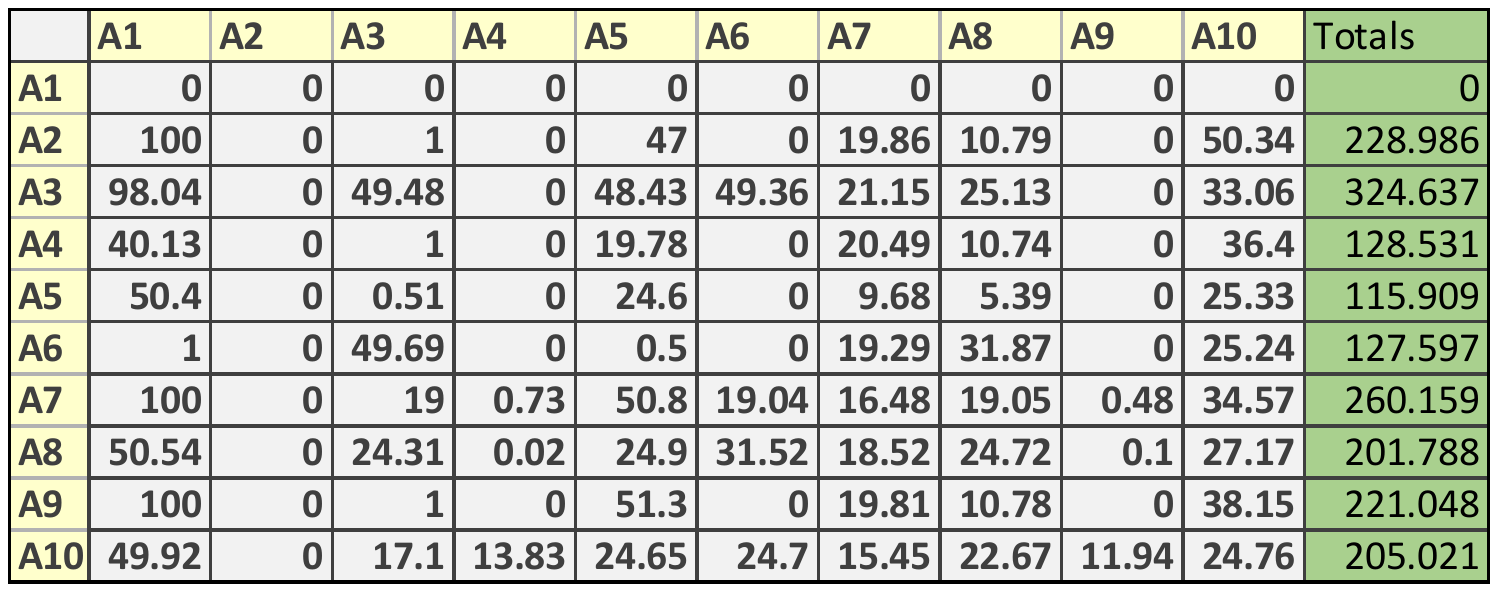} 
   \caption{Results of multi-access game in Fall 2021 semester. Each game consists of 100 slots. 
   Results are averaged over 1000 independent games.  The score of Algorithm $i \in \{1, ...,10\}$ is given in row $i$.} 
   \label{fig:table}
\end{figure*} 

A closeup look at the Fall 2021 competition is given in Fig. \ref{fig:table}.  That was a
small competition with 4-State, AlwaysTransmit, NeverTransmit, and only 7  student algorithms. 
 Each row $i \in \{1, \ldots, 10\}$ of 
Fig. \ref{fig:table} shows the score when algorithm $i$ played against each other algorithm (averaged over 1000 independent 100-slot games). Key algorithms of the 10 are: 
\begin{itemize} 
\item A1: NeverTransmit
\item A2: AlwaysTransmit
\item A3: 4-State
\item A6: This student used Tit-for-Tat-1
\end{itemize}

The 4-State algorithm (A3) dominated the competition by earning $324.637$ points.  Its simulated Self-Competition score of $49.48$ is shown in cell $(A3, A3)$ and this is consistent with the analytical result of the next section. The AlwaysTransmit algorithm (A2) earned $228.986$ points.  Tit-for-Tat-1  (A6) earned 
only $127.597$ points.  Its poor performance was due to a Self-Competition score of 0 and a No-Competition score of only 1.  However, it learns to fairly share with 4-State (earning $49.69$ points in that game, see the (A6, A3) cell). In tournaments with a larger number of students, there are more student algorithms that learn to share and Tit-for-Tat-1 performs better (as shown in Section \ref{section:merit}). 

The 4-State algorithm receives an average score per game of 32.46.  
It is interesting to note the following data which is not in the table: When 4-State is replaced with 3-State, the average score per game of 3-State is 24.68 (it comes as a close second place to the winning algorithm A7 which receives 25.80); when replaced by Tit-for-Tat-0 the average score per game for Tit-for-Tat-0 is 12.18; that for Tit-for-Tat-1 is 7.76.  Thus, the greedy version of Tit-for-Tat does worse.

\subsection{Transmitting on the last slot} 

Should an algorithm surely transmit on the last slot? Arguably, this is desirable because the opponent has no chance to retaliate.  However, this policy is harmful 
when an algorithm competes against an independent version of itself because 
both versions would necessarily receive 0 points on the last slot.  In view of this, an improvement on 4-State is the following: Implement 4-State, with the exception that if at any time  the opponent makes a decision that reveals it is \emph{not} 4-State (such as when the opponent fails to abide by the turn structure), then implement 4-State as usual on all but the last slot, and  transmit surely on the last slot.  When 4-State was replaced with this 
enhanced version of 4-State
in the Fall 2021 competition, it gained an extra 1.43 points in total (an average of 0.143 extra points per game).

\subsection{Figures of merit} \label{section:merit} 

The first two rows of Fig. \ref{fig:tournament} provide analytical values for the Self-Competition score $\alpha$ and the No-Competition score $\beta$  of  4-State, 3-State, Tit-for-Tat-0, Tit-for-Tat-1, and AlwaysTransmit (values 
of $\alpha$ and $\beta$ for 4-State and 3-State are derived in Section \ref{section:game-analysis}). The analytical values in the first two rows are consistent with simulation results. The 
third row in Fig. \ref{fig:tournament}  provides a ``Tournament Score'' $\gamma$ which is a simulation result on the average score per game, considering all games played, 
in a large tournament with 
135 algorithms (including these five algorithms together with all student algorithms that were created over 7 
semesters). Each algorithm was paired against all 135 other algorithms (including an independent version of itself) 
in 1000 independent runs of 100-slot games.   For this large tournament it can be seen that 3-State and 4-State are the best; the greedy version of Tit-for-Tat does significantly worse than the non-greedy version; the AlwaysTransmit algorithm does significantly worse than 3-State, 4-State, Tit-for-Tat-0, and Tit-for-Tat-1.

\begin{figure}[htbp] 
\centering
\begin{tabular}{|c||c|c|c|c|c|}
\hline
& 4-State & 3-State & TFT-0  & TFT-1 & AlwaysTran\\ \hline \hline
$\alpha$ & 49.500& 49.500& 0&  0& 0 \\ \hline
$\beta$  & 98.000& 49.667& 0& 1 &  100\\ \hline
$\gamma$ & 24.613& 22.548& 20.410& 15.326&10.714 \\ \hline
\end{tabular} 
\caption{A table of scores  (Self-Competition $\alpha$; No-Competition $\beta$; Tournament $\gamma$) for 4-State, 3-State, Tit-for-Tat-0, Tit-for-Tat-1, and AlwaysTransmit.}
\label{fig:tournament}  
\end{figure}

\section{Analysis of the 2-player MAC game} \label{section:game-analysis}

\subsection{Self-competition score of 4-State and 3-State}

\begin{thm} \label{thm:self-score} Fix $T$ as a positive integer.  In a $T$-slot game, the   
Self-Competition score for both 4-State and 3-State is: 
\begin{equation} \label{eq:self-score-3state} 
\alpha = \frac{T-1}{2} + \frac{1}{2^{T+1}}
\end{equation} 
Further, no other algorithm can achieve a larger Self-Competition score. 
\end{thm}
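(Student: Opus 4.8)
\emph{Proof plan.} Write $N = S_1 + S_2 = \sum_{t=1}^T \mathbf{1}\{X_1[t]\neq X_2[t]\}$ for the number of successful slots. Since both players run independent copies of the same algorithm, swapping the randomization variables $U_1\leftrightarrow U_2$ merely relabels the players, exchanging $S_1$ and $S_2$ without changing the joint law; hence $\expect{S_1}=\expect{S_2}=\tfrac12\expect{N}$. The theorem therefore reduces to the single claim $\expect{N}\le T-1+2^{-T}$, with equality attained by 3-State and 4-State. For the achievability half I would trace the self-play dynamics: in state~1 each player transmits with probability $1/2$, so each slot is independently a success with probability $1/2$; thus the first success occurs on slot $k$ with probability $2^{-k}$ for $k\le T$. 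Once it occurs, both players enter the turn-taking states~2/3 and alternate perfectly, so every remaining slot is a success, giving $N = T-k+1$ when the first success is on slot $k\le T$ (and $N=0$ otherwise). Hence $\expect{N}=\sum_{k=1}^T 2^{-k}(T-k+1)=T-1+2^{-T}$ after the standard evaluation of $\sum_j j2^j$. For 4-State the extra ``state~4'' branch is entered only if the opponent fails to take its turn, which never happens against a copy of 4-State, so 4-State and 3-State have identical self-play behavior.

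For the converse, let $E_t$ be the event that no success occurred on slots $1,\dots,t-1$; equivalently $X_1[s]=X_2[s]$ for all $s<t$, so on $E_t$ the two players hold a common history. The crux is the lemma $\Pr[\,\text{success on slot }t\mid E_t\,]\le 1/2$. To prove it I would recursively define Borel functions by $\phi_1=f_1$ and $\phi_s(u)=f_s\big(u,(\phi_1(u),\dots,\phi_{s-1}(u)),(\phi_1(u),\dots,\phi_{s-1}(u))\big)$, and verify by induction that on $E_t$ one has $X_i[s]=\phi_s(U_i)$ for $i\in\{1,2\}$ and all $s\le t$. Consequently, conditioning on $E_t$ together with any realized common history amounts to conditioning $U_1$ and $U_2$ on \emph{separate} events (one depending only on $U_1$, the other only on $U_2$); since $U_1\perp U_2$ a priori, they remain independent and identically distributed under this conditioning. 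Therefore $X_1[t]=\phi_t(U_1)$ and $X_2[t]=\phi_t(U_2)$ are conditionally i.i.d.\ Bernoulli$(\pi)$ for some $\pi$, so $\Pr[X_1[t]\neq X_2[t]\mid\,\cdot\,]=2\pi(1-\pi)\le 1/2$.

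Finally I would assemble the bound. Let $K$ be the first successful slot ($K=T+1$ if none occurs). Since at most one success occurs per slot, $N\le T-K+1$ on $\{K\le T\}$ and $N=0$ otherwise, so $\expect{N}\le\sum_{k=1}^T\Pr[K=k]\,(T-k+1)$. Put $q_k=\Pr[K\ge k]=\Pr[E_k]$; the lemma gives $q_{k+1}=q_k\,\Pr[\,\text{no success on slot }k\mid E_k\,]\ge q_k/2$, hence $q_k\ge 2^{-(k-1)}$. Abel summation then yields $\sum_{k=1}^T\Pr[K=k](T-k+1)=T-\sum_{k=2}^{T+1}q_k\le T-\sum_{k=2}^{T+1}2^{-(k-1)}=T-1+2^{-T}$, which matches the value achieved by 3-State and 4-State and completes the proof.

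The step I expect to be the genuine obstacle is the lemma. One must be careful with the randomization model: the single per-player variable $U_i$ makes the decision on slot $t$ depend, a priori, on \emph{both} $U_1$ and $U_2$ through the opponent-history argument. The recursive $\phi_s$ construction is exactly what collapses this dependence on $E_t$ and exposes the conditional independence of $U_1$ and $U_2$; and independence (not merely exchangeability of the pair $(X_1[t],X_2[t])$) is essential, since an exchangeable pair that equals $(1,0)$ or $(0,1)$ each with probability $1/2$ would violate the $1/2$ bound. Everything else is bookkeeping: the geometric first-success computation, the evaluation of $\sum_j j 2^j$, and the Abel-summation identity.
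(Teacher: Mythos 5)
Your proposal is correct, and the achievability half is essentially the paper's (the paper tracks $Y$, the number of initial no-score slots, and uses $T=Y+S_1+S_2$; you track the first success time $K$ and $N=T-K+1$, which is the same computation). The converse, however, takes a genuinely different route. The paper runs an induction on the horizon: it defines $s[T]$ as the supremum of $\expect{S_1+S_2}$ over all self-playing algorithms, bounds the first-slot success probability by $2q(1-q)\le 1/2$, and argues (in a footnote, somewhat informally) that conditioned on a non-success on slot~1 the remaining $k$-slot expected score is at most $s[k]$, because any advantage gained from the non-success feedback could be simulated from scratch. You instead prove a per-slot lemma: conditioned on no success so far and on the realized common history, the recursive $\phi_s$ construction shows each player's entire trajectory is a deterministic function of its own $U_i$, the conditioning event factors as a product event, so the two slot-$t$ decisions are conditionally i.i.d.\ Bernoulli and the success probability is at most $1/2$; the bound then follows from $q_k\ge 2^{-(k-1)}$ and Abel summation, with only the trivial one-success-per-slot bound needed after the first success. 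What your route buys is rigor at exactly the point the paper waves its hands: the $\phi_s$ recursion and the product-event observation make precise why no exploitable asymmetry or dependence accumulates before the first success, and you correctly flag that conditional \emph{independence} (not mere exchangeability) is what delivers $2\pi(1-\pi)\le 1/2$. What the paper's dynamic-programming route buys is that it needs no explicit handle on the conditional law of $(U_1,U_2)$ and extends more naturally to settings where the post-success dynamics also matter, since $s[k]$ encapsulates the whole remaining game rather than just the pre-capture phase. Both yield the identical bound $\expect{S_1+S_2}\le T-1+2^{-T}$.
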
 

\begin{proof} (Theorem \ref{thm:self-score} achievability)  Consider two independent versions of 4-State that compete over $T$ slots.  Let $S_1$ and $S_2$ be the resulting scores of the two players.   Since the algorithms are identical we have 
\begin{equation} \label{eq:same-score} 
\alpha = \expect{S_1} = \expect{S_2} 
\end{equation} 
Let $Y \in \{0, 1, \ldots, T\}$ denote the random number of 
initial slots in which nobody scores ($Y=T$ if nobody ever scores).  When 4-State plays against itself, once the first player scores, exactly one player will score on each slot thereafter.  Thus
$$ T = Y+S_1+S_2$$
Taking expectations of both sides and using \eqref{eq:same-score} gives
$$ T = \expect{Y} + 2\alpha$$
Thus
\begin{equation} \label{eq:alpha-pre} 
\alpha  = \frac{T - \expect{Y}}{2}
\end{equation} 
The random variable $Y$ has the following distribution
\begin{align}
P[Y=i] &= (1/2)^{i+1} \quad \forall i \in \{0, 1, \ldots, T-1\} \label{eq:Y1} \\
P[Y=T] &= (1/2)^T \label{eq:Y2}
\end{align}
Hence 
\begin{equation} \label{eq:Y}
\expect{Y}=   \sum_{i=0}^T i P[Y=i] = 1 - 2^{-T}
\end{equation} 
Substituting this expression for $\expect{Y}$ into \eqref{eq:alpha-pre} proves \eqref{eq:self-score-3state} for 4-State.  It can be shown that 4-State never uses state 4 when playing against itself, and hence $\alpha$ is also the Self-Competition score of 3-State. 
\end{proof} 

\begin{proof} (Theorem \ref{thm:self-score} converse)  Consider any algorithm that is independently used for both players.  For convenience, assume the algorithm is designed to run over an infinite sequence of slots $t \in \{1, 2, 3, \ldots\}$ (an algorithm that is designed to run over only a finite number $T$ of slots can be extended to run over an infinite horizon by choosing to never transmit after time $T$).    Arbitrarily assign one of the algorithms as Player 1 and assign its counterpart (identical) algorithm as Player 2.  For each positive integer $T$, define $V[T]$ as the random sum of scores of both players over the first $T$ slots: 
$$ V[T] = \sum_{t=1}^T \left[X_1[t](1-X_2[t]) + X_2[t](1-X_1[t])\right] $$
By symmetry, it follows that the Self-Competition score  (over $T$ slots)  is $\expect{V[T]}/2$.  For each positive integer $T$, define $s[T]$ as the supremum value of $\expect{V[T]}$ over all algorithms that independently compete against themselves.  Define $s[0]=0$. We want to show $s[T]/2 \leq \alpha$ for all nonnegative integers $T$, where $\alpha$ is the value in \eqref{eq:self-score-3state}.  Specifically, we want to show
\begin{equation} \label{eq:want-induction}
s[T] \leq T-1 + (1/2)^{T} \quad  \forall T \in \{0, 1, 2, 3, ...\} 
\end{equation} 
We use induction: Suppose \eqref{eq:want-induction} holds for $T=k$ for some nonnegative integer $k$ (it holds for $k=0$ since $s[0]=0$). We show it also holds for $T=k+1$. 
Since at most one player can score on each slot, we surely have 
\begin{equation} \label{eq:T-surely} 
V[k+1]\leq k+1
\end{equation} 

Let $A$ be the event that there is a success by one of the players on slot $1$, so that $A^c$ is the event that the first slot results in either an Idle or a Collision.  A key observation is
\begin{equation} \label{eq:key-observation} 
\expect{V[k+1]|A^c}\leq s[k]
\end{equation} 
since the event $A^c$ means that neither player scored on the first slot, there are $k$ slots remaining to accumulate the total score $V[k+1]$, and 
no information has been conveyed to either player on this first slot that would make the expected score over the remaining $k$ slots  larger than   $s[k]$.\footnote{If a non-success on the first slot made the expected score on the remaining slots \emph{more} than $s[k]$, one could use an algorithm that starts under the assumption that a non-existent preliminary slot just had a non-success. That would achieve an expected $k$-slot score that is larger than $s[k]$, a contradiction.}  

Let $q$ be the probability that the algorithm transmits on the very first slot.  Then
\begin{equation} \label{eq:PA-bound}
P[A] = 2q(1-q) \leq \sup_{q \in [0,1]} 2q(1-q) = 1/2 
\end{equation} 
We have 
\begin{align*}
&\expect{V[k+1]} \\
&=\expect{V[k+1]|A} P[A] + \expect{V[k+1]|A^c}(1-P[A]) \\
&\overset{(a)}{\leq} (k+1)P[A] + s[k](1-P[A])\\
&= s[k]+ P[A](k+1-s[k])\\
&\overset{(b)}{\leq} s[k] + (1/2)(k+1-s[k]) \\
&= (1/2)(k+1) + (1/2)s[k]\\
&\overset{(c)}{\leq}(1/2)(k+1)+(1/2)(k-1 + (1/2)^k)\\
&= k + (1/2)^{k+1}
\end{align*}
where (a) holds by \eqref{eq:T-surely} and \eqref{eq:key-observation}; (b) holds by \eqref{eq:PA-bound} and the fact $k+1-s[k]\geq 0$ (observe that $s[k]\leq k$ since at most one point can be scored per slot); (c) holds by the induction assumption 
that \eqref{eq:want-induction} holds for $T=k$.  Thus
$$ \expect{V[k+1]}\leq k + (1/2)^{k+1}$$
This holds for all algorithms. Taking the supremum value of $\expect{V[k+1]}$  
over all possible algorithms gives
$$ s[k+1] \leq k + (1/2)^{k+1}$$
which proves that \eqref{eq:want-induction} holds for $T=k+1$.
\end{proof} 

\subsection{No-competition score} 

\begin{lem} Fix $T$ as a positive integer.  In a $T$-slot game, the No-Competition scores $\beta_4$ and $\beta_3$ for 4-State and 3-State, respectively, are:  
\begin{align}
\beta_4 &= T-2 + \frac{3}{2^{T}} \label{eq:beta4}\\
\beta_3 &= \left\{\begin{array}{cc} 
  \frac{T}{2} - \frac{1}{3} + \frac{(1/3)}{2^{T}}& \mbox{if $T$ is even} \\
 \frac{T}{2} - \frac{1}{6} +\frac{(1/3)}{2^T}& \mbox{if $T$ is odd} 
 \end{array}\right. \label{eq:beta3} 
\end{align} 
\end{lem}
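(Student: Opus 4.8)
The plan is to trace the deterministic state trajectory of each algorithm against NeverTransmit, reduce the total score to a function of a single geometric random variable, and then evaluate the expectation with finite geometric sums. Since NeverTransmit never transmits, every slot produces an idle (whenever the algorithm does not transmit) or a success for the algorithm (whenever it does). Both algorithms start in state 1, where they transmit independently with probability $1/2$ on each slot, and the first such transmission is automatically a success. Let $N$ be the slot of this first success, so $P[N=n]=2^{-n}$ for $n\geq 1$ and $P[N>T]=2^{-T}$, and the game scores $0$ on the event $\{N>T\}$. Conditioned on $N$, the remainder of the trajectory is deterministic.

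For 4-State: conditioned on $N=n\leq T$, the algorithm scores on slot $n$, idles on slot $n+1$ in state 2, and — because the opponent does not take its turn — moves to state 4 and transmits successfully on every remaining slot. Hence the conditional score is $T-n$ for $1\leq n\leq T-1$ and $1$ for $n=T$, giving
\[
\beta_4=\sum_{n=1}^{T-1}(T-n)2^{-n}+2^{-T}.
\]
Using the identity $\sum_{m=1}^{M}m\,2^{m}=(M-1)2^{M+1}+2$ with $M=T-1$ (after the substitution $m=T-n$), this collapses to $T-2+3\cdot 2^{-T}$, which is \eqref{eq:beta4}.

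For 3-State: conditioned on $N=n\leq T$, after the success on slot $n$ the algorithm alternates state 2 (idle) and state 3 (which, against NeverTransmit, succeeds immediately), so it scores precisely on the slots $n,n+2,n+4,\ldots$ lying in $\{1,\ldots,T\}$; the conditional score is $\lfloor(T-n)/2\rfloor+1$. Thus
\[
\beta_3=\sum_{n=1}^{T}\left(\left\lfloor\frac{T-n}{2}\right\rfloor+1\right)2^{-n}.
\]
I would evaluate this by substituting $j=T-n$ and splitting over the parity of $j$: each residue class becomes a geometric series of ratio $4$, which is the source of the denominators $3$ (and $6$), while the parity of $T$ determines whether the range $\{0,\ldots,T-1\}$ ends on a complete even/odd pair, which is the source of the two cases in \eqref{eq:beta3}. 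Carrying out the two geometric sums and simplifying yields the stated expressions.

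The main obstacle is the boundary bookkeeping for 3-State — tracking whether the last slot $T$ is a scoring (state 3) or idle (state 2) slot as $N$ ranges over both parities — and the resulting split into the $T$ even and $T$ odd cases; the $\beta_4$ computation is comparatively clean. A secondary dependency is that the argument uses the precise transition rules of Figures \ref{fig:3-State} and \ref{fig:4-State}, namely that 3-State enters state 2 immediately after its first success and that 4-State abandons state 2 for state 4 after a single unanswered idle; once these are fixed the remaining work is routine, and checking the formulas at $T\in\{1,2,3,4\}$ against a direct enumeration is a useful sanity test.
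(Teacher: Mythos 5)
Your proposal is correct and follows essentially the same route as the paper: both condition on the geometrically distributed time of the first success against NeverTransmit, observe that the remaining trajectory is deterministic (one idle then all-transmit for 4-State; alternating idle/score for 3-State), and reduce the expectation to geometric sums, with the ratio-$4$ series over the parity classes producing the denominators $3$ and $6$ in \eqref{eq:beta3}. The paper merely packages the same computation slightly differently, writing the score as an explicit function of the number $Y=N-1$ of initial non-scoring slots and using $\expect{Y}$ and $P[\mbox{$Y$ odd}]$ rather than summing over $N$ directly.
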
 

\begin{proof} 
To compute $\beta_4$, suppose 4-State plays NeverTransmit and 
let $Y \in \{0, 1, 2, \ldots, T\}$ be the number of initial slots in which nobody scores. 
This $Y$ has the same distribution as \eqref{eq:Y1}-\eqref{eq:Y2}, and so by \eqref{eq:Y} we have $\expect{Y}=1-2^{-T}$. 
Let $S$ be the total score.  Then 
$$ S= T-Y - 1 + 1_{\{Y \in \{T-1, T\}\}}$$
which holds because if $Y \notin \{T-1, T\}$ then there are a total of $Y+1$ slots in which nobody scores (including the single idle slot given to the opponent when 4-State moves to state 2), and only $Y$ such slots if $Y \in \{T-1, T\}$. 
Taking expectations of both sides yields 
\begin{align*}
\beta_4 = T - \expect{Y} -1 + P[Y\geq T-1]
\end{align*}
  Substituting $\expect{Y} = 1-2^{-T}$ and $P[Y\geq T-1] = 2^{-(T-1)}$ yields \eqref{eq:beta4}. 
   
To compute $\beta_3$, assume $T$ is an even positive integer and 
suppose 3-State plays NeverTransmit over $T$ slots.   Let $S$ be the total score and let $Y$ be the number of initial slots in which nobody scores. After 3-State first scores, it will oscillate between being idle and scoring.  Thus
 \begin{align*}
 S &= \frac{T-Y}{2} + \frac{1}{2}1_{\left\{\mbox{$Y$ is odd}\right\}} \\
 \implies \beta_3 &= \frac{T-\expect{Y}}{2} + (1/2)P[\mbox{$Y$ is odd}] 
 \end{align*}
 and using $\expect{Y}=1-2^{-T}$ and $P[\mbox{$Y$ is odd}] = \frac{1-2^{-T}}{3}$ gives 
 the result  \eqref{eq:beta3} when $T$ is even. 
 
 For the case $T$ is odd we have 
$$ S = \frac{T-Y}{2} + \frac{1}{2}1_{\left\{\mbox{$Y$ is even}\right\}} $$
and we use $\expect{Y} = 1-2^{-T}$ and $P[\mbox{$Y$ is even}]=(1/6)(4 - (1/2)^{T-1}) 
$ (which is not the same as the probability that $Y$ is even in the case $T$ is even). 
 \end{proof} 
 
\section{Multi-user MAC} \label{section:multi-user} 

Now consider the MAC problem with $n$ users.  In this version of the problem, we assume the receiver can detect the number of users who transmitted on each slot (as discussed in Section \ref{section:feedback}).   The feedback $F[t]$ given to the users at the end of each slot $t$ is equal to the number of users who transmitted: 
\begin{align*}
F[t] &=0 \implies Idle\\
F[t] &= 1 \implies Success\\
F[t] &\in \{2, \ldots, n\} \implies Collision
\end{align*}
Note that the feedback value $F[t]$ specifies the number of transmitters on slot $t$, but does not indicate \emph{which} users transmitted. 
If $n=2$ then this feedback is equivalent to success/idle/collision feedback.  However, if $n>2$ this feedback is more detailed. It is assumed that all users know the value $n$ at the start, and this value does not change for the timescale of interest.   For knowledge of $n$, one can imagine a ``slot $t=0$'' where all users agree to transmit, so that the feedback signal $F[0]=n$ communicates $n$ to all users.  

The users have no way to distinguish themselves at the start of slot $t=1$, so there are no labels $\{1, 2, \ldots, n\}$ that the users can identify with.  The central goal of this section and the next is to develop an algorithm that all users can independently implement that minimizes the expected time to the first success. This can be viewed as the fundamental learning time required to take a population of $k$ users and distinguish one of them 
through transmission and feedback messaging. Minimizing the time to the first success is useful for either maximizing throughput or for quickly assigning labels to users. 
The next subsection shows how it connects to the finite horizon problem of 
maximizing the Self-Competition score. Section \ref{section:first-capture} presents an (infinite horizon) algorithm for first capturing the channel. 

\subsection{Maximizing throughput} 

Fix $T$ as a positive integer. 
Given $n$ indistinguishable users at the start and the $F[t]$ feedback structure described above, consider the problem of designing an algorithm that is independently implemented by all users that maximizes the total sum number of packet successes over some time horizon $\{1, \ldots, T\}$.  Let $S_1, \ldots, S_n$ be the scores of each user (the users do not know their labels).  Let $V$ be the total score
$$ V = \sum_{i=1}^n S_i$$
Since all algorithms are identical we have 
$$ \expect{S_i} = \frac{\expect{V}}{n} \quad \forall i \in \{1, \ldots, n\}$$
Thus, maximizing $\expect{V}$ is equivalent to maximizing the expected score of each user. 
Let $Y \in \{0, \ldots, T\}$ be the random number of initial slots in which nobody scores ($Y=T$ if nobody ever scores).  Then 
$$ V \leq T-Y \implies \expect{V} \leq T - \expect{Y}$$
This upper bound on $\expect{V}$ can be \emph{achieved} by the policy of letting the first user who scores transmit on the channel interference-free for all future slots (up to and including slot $T$). Thus, maximizing the expected score of each user is equivalent to minimizing $\expect{Y}$. 
While the random variable $Y$ is bounded by the fixed horizon of $T$ slots, so that there is a (small) possibility that no user is ever successful over the $T$ slots, it is convenient to consider the more fundamental problem where there is an infinite horizon, so that $Y$ has no upper bound (this is done in the next subsection).  When the fixed horizon $T$ is large in comparison to the number of users $n$, the  probability that no user is ever successful is so small that the difference between fixed horizon analysis and infinite horizon analysis is insignificant.

\subsection{Minimizing the expected time to the first success} \label{section:first-capture} 

For each positive integer $n$, this subsection develops an algorithm that is independently implemented by $n$ users and seeks to minimize the expected time to the first success.   
There is no deadline, and so this is an infinite horizon problem. 
Let $Z_n$ denote the random time until the first success under the $n$ user algorithm. Define 
$$ z_n= \expect{Z_n} $$
Let $p_n$ denote the transmission probability on the first slot. The idea    is to have the $n$ users independently transmit with probability $p_n$ on slot $1$ and then receive feedback $F[1]$ that specifies the number of transmitters.  If $F[1]=1$ there was a single success and the algorithm terminates. If $F[1] \in \{0, n\}$ then no information that can distinguish the users
is gained and we repeat.  If $F[1]=i$ with $i \in \{2, \ldots, n-1\}$ then the $n$ users are partitioned into a group of size $i$ and a group of size $n-i$, 
the values of $z_i$ and $z_{n-i}$ are compared, the least desirable group is thrown away and the problem is recursively solved on the remaining group with a residual expected time $\min\{z_i, z_{n-i}\}$. 

\subsubsection{Case $n=1$}

If $n=1$ the algorithm is to transmit with probability $p_1=1$, which yields $z_1=1$.  

\subsubsection{Case $n=2$} 

If $n=2$ the algorithm is for both users to independently transmit each slot with probability $p_2=1/2$, so that $z_2 = 2$.

\subsubsection{Case $n=3$} \label{section:k3} 

If $n=3$, the users independently transmit with probability $p \in (0,1)$ on the first slot (the value of $p$ shall be optimized later). 
The feedback after slot $1$ satisfies $F[1] \in \{0, 1, 2, 3\}$.  Based on the value of $F[1]$ do the following: 
\begin{itemize} 
\item $F[1]=0$ (Idle): Repeat. 
\item $F[1]=1$ (Success): Done.
\item $F[1]=2$ (Collision between 2 users): The third user that did not transmit on slot $1$ will transmit alone on slot $2$, while the two users who transmitted on slot $1$ are silent on slot $2$. 
\item $F[1]=3$ (Collision between 3 users): Repeat.
\end{itemize} 
The expected time to the first success is: 
\begin{align}
 z_3 &= \expect{Z_3} \nonumber\\
 &= \sum_{i=0}^3\expect{Z_3|F[1]=i}{3 \choose i} p^i(1-p)^{3-i} \label{eq:sub1} 
 \end{align}
 Under this scheme we have 
 \begin{align*}
   \expect{Z_3|F[1]=0} &= 1+z_3\\
 \expect{Z_3|F[1]=1} &= 1\\
 \expect{Z_3|F[1]=2} &=2\\
  \expect{Z_3|F[1]=3} &= 1 + z_3
 \end{align*}
 Substituting these into \eqref{eq:sub1} gives
 \begin{align*}
&z_3= \\
&\quad (1+z_3)(p^3 + (1-p)^3) + (1)3p(1-p)^2 + (2)3p^2(1-p)
 \end{align*}
 Thus
  $$ z_3 =  \frac{1+3p^2(1-p)}{1-p^3-(1-p)^3}$$
 The value of $p$ that minimizes the right-hand-side is
 $$ p_3  \approx  0.411972 $$
 and so 
 \begin{equation} \label{eq:z3}
 z_3 = \inf_{p \in (0,1)} \left\{    \frac{1+3p^2(1-p)}{1-p^3-(1-p)^3} \right\}  \approx 1.78795
 \end{equation}
 It is surprising that $z_3<z_2$, meaning that it is more efficient to start with $3$ users than to start with 2 users. 

\subsubsection{Case $n\in \{4, 5, 6, \ldots\}$} 

Fix $n$ as the number of users and assume $n \geq 4$.  On the first slot the users independently transmit with probability $p$ (to be optimized later) and react to the feedback $F[1] \in \{0, 1, \ldots, n\}$ as follows: 
\begin{itemize} 
\item $F[1]=0$ (idle): Repeat.
\item $F[1]=1$ (succes): Done. 
\item $F[1]=i\in \{2, \ldots, n-1\}$ (collision of $i$ users): This partitions the users into two groups, one of size $i$ (consisting of those users who have transmitted on the first slot) and one of size $n-i$ (the others).  If $z_i \leq z_{n-i}$ then follow the optimal algorithm for minimizing the time to the first success for $i$ users, utilizing only the $i$ users who transmitted (while the remaining $n-i$ users remain silent).  Else, follow the optimal algorithm for minimizing the time to the first success for $n-i$ users, utilizing only the $n-i$ users who did not transmit in the first slot while the remaining users are silent. 
\item $F[1]=n$ (collision of $n$ users): Repeat. 
\end{itemize} 
The expected time to the first success is then 
\begin{align*}
z_n &= \expect{Z_n} \\
&= \sum_{i=0}^n \expect{Z_n|F[1]=i}{n \choose i} p^i(1-p)^{n-i} \\
&= (1+z_n)(p^n +(1-p)^n) +  np(1-p)^{n-1} \\
& \quad +  \sum_{i=2}^{n-1}(1+ \min\{z_i, z_{n-i}\}){n \choose i} p^i(1-p)^{n-i}
\end{align*}
Thus 
\begin{align*}
z_n &= \frac{p^n + (1-p)^n + np(1-p)^{n-1} }{1-p^n-(1-p)^n}\\
&+ \frac{\sum_{i=2}^{n-1}(1+ \min\{z_i, z_{n-i}\}){n \choose i} p^i(1-p)^{n-i}}{1-p^n-(1-p)^n}\\
&= \frac{1 + \sum_{i=2}^{n-1} \min\{z_i, z_{n-i}\}{n \choose i} p^i(1-p)^{n-i}}{1-p^n-(1-p)^n } 
\end{align*}
The value of $p$ can be chosen to minimize the right-hand-side (call this minimizer $p_n$) and so 
\begin{equation} \label{eq:zk}
z_n = \inf_{p \in (0,1)} \left\{ \frac{1+ \sum_{i=2}^{n-1}\min\{z_i, z_{n-i}\}{n \choose i} p^i(1-p)^{n-i}}{1-p^n-(1-p)^n} \right\} 
\end{equation}
The values of $p_n$ and $z_n$ can be recursively computed in terms of $z_1, \ldots, z_{n-1}$.  The first several values are given below:

\begin{center} 
\begin{tabular}{|c||c|c|}
\hline
$n$ & $p_n$ & $z_n$  \\ \hline \hline
1 & 1 & 1 \\ \hline 
2 &0.5 & 2 \\ \hline 
3 & 0.411972& 1.78795 \\ \hline 
4 & 0.302995 & 2.13454 \\ \hline 
5 &  0.238640& 2.15575 \\ \hline 
6 & 0.191461&2.26246  \\ \hline 
7 &  0.166629& 2.27543 \\ \hline 
\end{tabular} 
\end{center}

\section{Converse} \label{section:converse} 

The $z_n$ and $p_n$ values of the previous section are optimized over algorithms that partition users into two groups and throw away the least desirable group.  It is not clear if gains can be achieved by keeping track of an ever-increasing number of groups 
and having multiple groups probabilistically transmit at the same time.  The information state of the problem is remarkably complex.  For each positive integer $n$, define $z_n^*$ as the infimum expected time for the first success, considering all possible algorithms that can be independently implemented by $n$ users.  Formally, this optimizes over all sequences of Borel measurable functions $\{h_t\}_{t=1}^{\infty}$ with the structure $$h_t:[0,1)\times \{0,1\}^{t-1} \times \{0,1, \ldots, n\}^{t-1}\rightarrow \{0,1\}$$ 
such that the transmission decision for each user $i \in \{1, \ldots, n\}$ is
$$ X_i[t] = h_t(U_i, H_{i,self}[t], (F[1], F[2], \ldots, F[t-1])) \quad \forall t \in \{1, 2, 3, \ldots\}$$ where $\{U_1, \ldots, U_n\}$ are i.i.d. $U[0,1)$ random variables and  $H_{i,self}[t]$ is the history of decisions $(X_i[1], X_i[2], \ldots, X_i[t-1])$ made by the user $i$.  
Note that there is no history before slot 1 so the first function has the structure
$h_1(U_i)$.

Clearly $1\leq z_n^*\leq z_n$, where $z_n$ are the values associated with the proposed algorithm of the previous section.   The author conjectures that $z_n=z_n^*$ for all $n$.  This section proves the conjecture for 
the  special cases $n \in \{1, 2, 3, 4, 6\}$. The cases $n=4$ and $n=6$ are particularly challenging and these are proven by a simple technique of introducing virtual users with enhanced capabilities.

 \begin{thm} For the special cases $n \in \{1, 2, 3, 4, 6\}$ we have $z_n^*=z_n$. 
 \end{thm}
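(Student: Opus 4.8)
The plan is to prove the matching lower bound $z_n^*\ge z_n$; the opposite inequality $z_n^*\le z_n$ is free, since the recursive scheme of Section~\ref{section:multi-user} is a legal $n$-user algorithm and $z_n^*$ is an infimum over a class containing it. I would prove $z_n^*\ge z_n$ by strong induction on $n$, reusing the ``phantom preliminary slot'' device from the converse part of Theorem~\ref{thm:self-score}. The cases $n=1,2$ are immediate: any single-user algorithm takes at least one slot, so $z_1^*\ge 1$; and for $n=2$, if $q$ is the common first-slot transmission probability then a success occurs with probability $2q(1-q)\le 1/2$ while on a non-success neither user has received any symmetry-breaking information, so $z_2^*\ge 1+(1-2q(1-q))z_2^*$ for every $q$, and minimizing the right side over $q$ forces $z_2^*\ge 2$.

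For the inductive step, fix $n\ge 3$, assume $z_i^*=z_i$ for all $i<n$, and consider any $n$-user algorithm with common first-slot transmission probability $q$. Conditioning on $F[1]$: on $F[1]\in\{0,n\}$ nothing symmetry-breaking has been revealed, so the conditional expected residual time is at least $z_n^*$ (by the phantom-slot argument); on $F[1]=1$ we are done; and on $F[1]=i\in\{2,\dots,n-1\}$ the population splits into a ``transmitted'' group of size $i$ and a ``silent'' group of size $n-i$, for which the key claim is that the conditional expected residual time is at least $\min\{z_i^*,z_{n-i}^*\}=\min\{z_i,z_{n-i}\}$ (using the induction hypothesis on $i$ and on $n-i$, both of which are $<n$). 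Granting the claim, averaging over $F[1]$ and solving the resulting inequality for $z_n^*$ gives
\[
z_n^*\ \ge\ \frac{1+\sum_{i=2}^{n-1}\binom{n}{i}q^i(1-q)^{n-i}\min\{z_i,z_{n-i}\}}{1-q^n-(1-q)^n},
\]
and since $z_n$ is, by \eqref{eq:zk}, exactly the infimum of the right-hand side over $q\in(0,1)$, we conclude $z_n^*\ge z_n$, which with $z_n^*\le z_n$ closes the induction.

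It remains to prove the split claim, and this is where the cases diverge. When one of the two groups is a singleton --- splits with $\min(i,n-i)=1$ --- the claim is trivial, because $\min\{z_i^*,z_{n-i}^*\}\le z_1^*=1$ while at least one further slot is obviously needed; this disposes of all of $n\in\{1,2,3\}$, as well as of the $(3,1)$ split for $n=4$ and the $(5,1)$ split for $n=6$ (so, conveniently, the $n=6$ proof never needs $z_5^*$). The real content is the ``balanced'' splits $(2,2)$ for $n=4$ and $(2,4)$ and $(3,3)$ for $n=6$, where one must show that four (respectively six) users who know they occupy such a configuration cannot reach a first success in expected time below $z_2^*=2$, $z_2^*=2$, and $z_3^*$, respectively. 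The difficulty is that the post-split information state is genuinely complicated: a later collision blends transmissions from the two groups, so a user cannot tell whether its own group has just split further, and the obvious fix --- revealing, after each slot, how many of the colliding transmitters came from each group --- is \emph{too generous}: one can check it already lets a $(2,2)$ subproblem finish in expected time at most $\tfrac{76}{39}<2$, so such a relaxation cannot deliver the required bound. The device I would use is the one the paper advertises: stand a single \emph{virtual user} in for an entire group, granting it \emph{enhanced capabilities} that collapse the intractable information state down to a small, Markovian one, but calibrating those capabilities so that the resulting tractable optimization still returns the target value $\min\{z_i,z_{n-i}\}$ and not something smaller; the lower bound for the true subproblem then follows by comparison, and the inductive step goes through.

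The step I expect to be the genuine obstacle is precisely the construction and analysis of that virtual-user relaxation: it must thread a narrow needle --- powerful enough to make the post-split state small and Markovian, yet weak enough that the relaxed optimum does not undershoot the target --- and the fact that this is achievable for $(2,2)$, $(2,4)$, and $(3,3)$ but (as far as is known) not for the $(2,3)$ split forced by $n=5$ is exactly why the theorem stops at $n\in\{1,2,3,4,6\}$. Everything else --- the induction, the conditioning on $F[1]$, the reduction to the split claim, and the trivial singleton splits --- is bookkeeping on the template already established in the converse part of Theorem~\ref{thm:self-score}.
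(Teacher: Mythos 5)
You have correctly reproduced the outer shell of the paper's argument: the trivial direction $z_n^*\le z_n$, the $\epsilon$-near-optimal algorithm, the conditioning on $F[1]$ with the observation that $F[1]\in\{0,n\}$ reveals nothing (this is exactly \eqref{eq:zfact0}--\eqref{eq:zfactk} and the derivation of \eqref{eq:further}), the dismissal of singleton splits, and the identification of the $(2,2)$, $(2,4)$, and $(3,3)$ splits as the entire content of the cases $n=4$ and $n=6$. Your aside that a relaxation revealing per-group transmitter counts is too generous is also a correct instinct. But you then stop at precisely the point where the proof has to be carried out: you state that a virtual-user relaxation ``calibrated'' to return $\min\{z_i,z_{n-i}\}$ exists, without constructing it or proving its value, and you yourself flag this as the expected obstacle. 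That is the theorem's actual content, so the proposal as written has a genuine gap rather than a complete proof.

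For the record, here is what the missing construction is. For the $(2,2)$ split (and identically for $(2,4)$), one introduces \emph{two} virtual devices, each permitted to send any nonnegative integer number of packets per slot, with feedback equal to the total packet count; a success is a slot on which exactly one device sends exactly one packet. Device 1 internally simulates one user running $ALG_A$ and one (resp.\ two) users running $ALG_B$, summing their transmissions; device 2 simulates the complementary half of each group. This emulation reproduces the feedback $F[t]$ exactly, so the two-device optimum lower-bounds the split subproblem. The two-device optimum $r^*$ is then shown to equal $2$ (Lemma \ref{lem:enhanced2}): with $a_i$ the probability of sending $i$ packets on the first slot, the events $A$ (equal counts), $B$ (success), $C$ (otherwise) give $r^*(1-P[A])\ge 1+P[C]-\epsilon$, and the key algebraic step is $P[A]-P[B]\ge a_0^2+a_1^2-2a_0a_1=(a_0-a_1)^2\ge 0$, which forces $r^*\ge 2-4\epsilon$. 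Note this is \emph{not} obviously ``weak enough'': a single virtual device holds strictly more information than the two real users it simulates, and the proof that the relaxed optimum does not drop below $2$ is a genuine computation, not a calibration one can assume. For the $(3,3)$ split the paper uses \emph{three} virtual devices, each sending $0$, $1$, or $2$ packets (one user from each group apiece), and showing the relaxed optimum $y^*$ is at least $z_3$ requires bounding a two-variable ratio $\bigl(1-3(1-a-c)a^2\bigr)/\bigl(1-a^3-(1-a-c)^3-c^3\bigr)$ over a constraint set, with a concavity argument in $c$ to collapse it to the one-variable problem defining $z_3$ in \eqref{eq:z3}. None of this is bookkeeping, and without it the inductive step for $n=4$ and $n=6$ does not close.
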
 
 
 \begin{proof} 
The fact $z_1^*=z_1=1$ is clear.  The next subsections prove this theorem for the
cases $n \in \{2, 3, 4, 6\}$. 
\end{proof} 


\subsection{Preliminary inequalities} 

Consider $n\geq 2$. Observe that if all $n$ users transmit on the first slot (so that $F[1]=n$) or if none of the users transmit on the first slot (so that $F[1]=0$) then no new information that can distinguish users 
is learned and the infimum remaining time to finish is again $z_n^*$. This observation implies that if $Z$ is the random time to the first success under some particular algorithm, then
\begin{align}
\expect{Z|F[1]=1} &=1 \label{eq:fact-minus1} \\
\expect{Z|F[1]=0} &\geq 1 + z_n^* \label{eq:zfact0}\\
\expect{Z|F[1]=n} &\geq 1 + z_n^* \label{eq:zfactk} 
\end{align} 
where \eqref{eq:fact-minus1} holds because a success on the first slot means $Z=1$; 
\eqref{eq:zfact0}-\eqref{eq:zfactk} hold because if one slot is used and no information is gained, then expected remaining time is again at least $z_n^*$. 

For each positive integer $n \geq 2$ define 
\begin{align*}
\alpha_n &= 1 - (1-\frac{1}{2e})^{1/n}\\
\beta_n &=  (1-\frac{1}{2e})^{1/n}
\end{align*}
For $n\geq 2$ it can be shown that 
$$ 0<\alpha_n\leq \beta_n <1$$

\begin{lem} \label{lem:z-e} For $n\geq 2$ we have $ z_n^* \leq e$.
\end{lem}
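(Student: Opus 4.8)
The plan is to exhibit a single, explicit (not necessarily optimal) algorithm that achieves expected first-success time at most $e$ regardless of $n$, since $z_n^*$ is an infimum over all algorithms. The natural candidate is the memoryless ``fixed-probability slotted Aloha'' strategy: every user transmits independently with probability $p=1/n$ on every slot, ignoring all feedback except that everyone stops once $F[t]=1$. Under this strategy the slots are i.i.d., and a success occurs on a given slot with probability $q_n = n \cdot \frac{1}{n}(1-\frac{1}{n})^{n-1} = (1-\frac{1}{n})^{n-1}$. Hence the time to first success is geometric and $\expect{Z} = 1/q_n = (1-\frac1n)^{-(n-1)}$.

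The remaining step is the elementary analytic fact that $(1-\frac1n)^{-(n-1)} \le e$ for every integer $n \ge 2$ — equivalently $(1-\frac1n)^{n-1} \ge 1/e$. First I would check the boundary case $n=2$ by hand: $(1/2)^{1} = 1/2 \ge 1/e \approx 0.368$. For the general case, take logarithms: the claim is $(n-1)\ln(1-\frac1n) \ge -1$. Writing $x = 1/n \in (0, 1/2]$, this is $\frac{1-x}{x}\ln(1-x) \ge -1$, i.e. $(1-x)\ln(1-x) \ge -x$, i.e. $(1-x)\ln(1-x) + x \ge 0$. Let $g(x) = (1-x)\ln(1-x) + x$; then $g(0)=0$ and $g'(x) = -\ln(1-x) - 1 + 1 = -\ln(1-x) \ge 0$ on $[0,1)$, so $g$ is nondecreasing and hence nonnegative on $[0,1)$. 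This gives $q_n \ge 1/e$ and therefore $\expect{Z} \le e$.

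Putting it together: the fixed-probability-$1/n$ algorithm is a valid algorithm in the class over which $z_n^*$ is defined (it is clearly implementable with the i.i.d. randomization variables $U_i$ and uses only the success/no-success indication from the feedback), so $z_n^* \le \expect{Z} = (1-\frac1n)^{-(n-1)} \le e$. I do not expect any genuine obstacle here; the only mild care needed is to confirm the $n=2$ endpoint separately (it is not covered by a strict-inequality argument but holds with room to spare) and to note that the bound is not tight for the algorithm of Section \ref{section:first-capture} — it is merely a crude uniform bound, which is all the lemma asserts and presumably all it is needed for in the subsequent converse arguments.
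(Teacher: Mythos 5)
Your proposal is correct and is essentially identical to the paper's proof: both exhibit the fixed-probability-$1/n$ Aloha strategy, observe that the success time is geometric with mean $(1-1/n)^{-(n-1)}$, and bound this by $e$. The only difference is that you spell out the elementary inequality $(1-1/n)^{n-1}\geq 1/e$, which the paper asserts without proof.
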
 
\begin{proof} 
One particular algorithm is to have all $n$ users independently transmit with probability $1/n$, so a success occurs independently with probability $(1-1/n)^{n-1}$ on every slot and the average time to success is $1/(1-1/n)^{n-1}$. Since this is not necessarily the best algorithm, we have
$$ z_n^* \leq \frac{1}{(1-1/n)^{n-1}} \leq e$$
\end{proof}

\begin{lem}  Fix $n\geq 2$.  Fix $\epsilon \in (0,1/2]$. Consider any algorithm that comes within $\epsilon$ of optimality, so that 
$$ z_n^*\leq \expect{Z} \leq z_n^* + \epsilon$$
where $Z$ is the random time to the first success. 
Let $p \in [0,1]$ be the probability of transmitting on the first slot that is used by this algorithm. 
Then 
\begin{equation} \label{eq:p-in} 
p \in [\alpha_n, \beta_n]
\end{equation} 
Furthermore, 
\begin{align}
z_n^*  \geq \frac{1 +\sum_{i=2}^{n-1} \expect{Z-1|F[1]=i}{n \choose i} p^i(1-p)^{n-i} - \epsilon}{1-p^n - (1-p)^n} \label{eq:further}
\end{align}
\end{lem}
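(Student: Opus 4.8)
The plan is to condition on the first-slot feedback $F[1]$ and substitute the near-optimality hypothesis $\expect{Z}\le z_n^*+\epsilon$ into the resulting expansion; both claims \eqref{eq:p-in} and \eqref{eq:further} will fall out of a single ``master inequality.''

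First I would derive that master inequality. Write $P_i=\binom{n}{i}p^i(1-p)^{n-i}$ for the probability that $F[1]=i$ and expand $\expect{Z}=\sum_{i=0}^n\expect{Z|F[1]=i}P_i$. I would bound the $i\in\{0,n\}$ terms with \eqref{eq:zfact0}--\eqref{eq:zfactk}, the $i=1$ term with \eqref{eq:fact-minus1}, and for $i\in\{2,\dots,n-1\}$ write $\expect{Z|F[1]=i}=1+\expect{Z-1|F[1]=i}$. Using $\sum_{i=0}^n P_i=1$, so that $P_1+\sum_{i=2}^{n-1}P_i = 1-P_0-P_n$, the ``$+1$'' contributions combine with the $i\in\{0,n\}$ terms to give
\[ \expect{Z}\ \ge\ 1 + z_n^*\bigl(p^n+(1-p)^n\bigr) + \sum_{i=2}^{n-1}\expect{Z-1|F[1]=i}\binom{n}{i}p^i(1-p)^{n-i}. \]

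For \eqref{eq:p-in}, I would drop the (nonnegative) sum and combine with $\expect{Z}\le z_n^*+\epsilon$ to get $z_n^*\bigl(1-p^n-(1-p)^n\bigr)\ge 1-\epsilon$, hence $1-p^n-(1-p)^n\ge (1-\epsilon)/z_n^* \ge (1/2)/e = 1/(2e)$, using $\epsilon\le 1/2$ together with Lemma~\ref{lem:z-e}. Thus $p^n+(1-p)^n\le 1-\frac{1}{2e}$; since $p^n\ge 0$ and $(1-p)^n\ge 0$ separately, this forces $p^n\le 1-\frac{1}{2e}$ and $(1-p)^n\le 1-\frac{1}{2e}$, which is exactly $p\le\beta_n$ and $p\ge\alpha_n$. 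For \eqref{eq:further} I would instead keep the sum: combining the master inequality with $\expect{Z}\le z_n^*+\epsilon$ gives $z_n^*\bigl(1-p^n-(1-p)^n\bigr)\ge 1+\sum_{i=2}^{n-1}\expect{Z-1|F[1]=i}\binom{n}{i}p^i(1-p)^{n-i}-\epsilon$, and dividing through by $1-p^n-(1-p)^n$ yields the claim.

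There is no real obstacle here beyond the binomial bookkeeping; the two points needing care are (a) that every conditional expectation in sight is finite, which holds because $\expect{Z}\le z_n^*+\epsilon<\infty$ by Lemma~\ref{lem:z-e} and, once \eqref{eq:p-in} is known, $p\in(0,1)$ so each event $\{F[1]=i\}$ has positive probability; and (b) that the denominator $1-p^n-(1-p)^n$ is strictly positive before dividing in \eqref{eq:further} --- but that is precisely the $1/(2e)$ bound obtained en route to \eqref{eq:p-in}, so I would prove \eqref{eq:p-in} first and reuse it. Note that the ``virtual users with enhanced capabilities'' device advertised in the introduction is \emph{not} needed for this lemma; it enters only afterward, when the residual subproblems on the partitioned groups must be compared.
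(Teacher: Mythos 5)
Your proposal is correct and follows essentially the same route as the paper: expand $\expect{Z}$ by conditioning on $F[1]$, apply \eqref{eq:fact-minus1}--\eqref{eq:zfactk}, use the binomial identity to collapse the constant terms to $1$, derive $p^n+(1-p)^n\le 1-\frac{1}{2e}$ from $\epsilon\le 1/2$ and Lemma~\ref{lem:z-e}, and then divide by the now-provably-positive $1-p^n-(1-p)^n$. The paper's \eqref{eq:to-divide} is exactly your master inequality in rearranged form, and your added remarks on finiteness and on ordering \eqref{eq:p-in} before the division are sound but not substantively different.
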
 
\begin{proof} 
We have 
\begin{align*}
z_n^* + \epsilon &\geq \expect{Z}\\
&=  \sum_{i=0}^{n} \expect{Z|F[1]=i}P[F[1]=i]\\
&\geq (1 + z_n^*)(p^n + (1-p)^n) +   np(1-p)^{n-1} + \sum_{i=2}^{n-1} \expect{Z|F[1]=i}{n \choose i} p^i(1-p)^{n-i}
\end{align*}
where the final inequality holds by \eqref{eq:fact-minus1}-\eqref{eq:zfactk}.  Hence
\begin{equation} \label{eq:to-divide} 
z_n^*(1-p^n-(1-p)^n) + \epsilon \geq p^n + (1-p)^n + np(1-p)^{n-1}+  \sum_{i=2}^{n-1} \expect{Z|F[1]=i}{n \choose i} p^i(1-p)^{n-i}
\end{equation} 
The right-hand-side is at least 1 (since $\expect{Z|F[1]=i}\geq 1$ for all $i$) and so 
\begin{align*}
1 &\leq z_n^*(1-p^n-(1-p)^n) + \epsilon \\
&\leq e(1-p^n - (1-p)^n) + 1/2
\end{align*} 
where the final inequality holds because $z_n^*\leq e$ (by Lemma \ref{lem:z-e}) and the assumption $\epsilon \leq 1/2$. 
Thus
$$ p^n +(1-p)^n \leq 1 - \frac{1}{2e} $$
In particular, both $p$ and $1-p$ are at most $(1-\frac{1}{2e})^{1/n}$. Thus $p  \in [\alpha_n, \beta_n]$, which proves \eqref{eq:p-in}. 
In particular, $1-p^n-(1-p)^n > 0$. Dividing  \eqref{eq:to-divide} by $1-p^n-(1-p)^n$  yields 
\eqref{eq:further}.
\end{proof} 

\subsection{Case $n=2$} 

Suppose there are 2 users.   From \eqref{eq:further} we have 
\begin{align*}
z_2^* &\geq \frac{1-\epsilon}{1-p^2-(1-p)^2} \\
&\overset{(a)}{\geq} (1-\epsilon)\inf_{q \in [\alpha_2, \beta_2]}\left\{\frac{1}{1-q^2-(1-q)^2}\right\}\\
&=2(1-\epsilon)
\end{align*}
where (a) holds because $p \in [\alpha_2, \beta_2]$ (by \eqref{eq:p-in}). 
This holds for all $\epsilon \in (0,1/2]$ and so  $z_2^*\geq 2$. We already know $z_2^*\leq 2$ and so   $z_2^*=2$.

\subsection{Case $n=3$} 

Suppose there are $3$ users.  From \eqref{eq:further} we have 
\begin{align*}
z_3^* &\geq \frac{1 + \expect{Z-1|F[1]=2}3p^2(1-p) -\epsilon}{1-p^3-(1-p)^3}\\
&\overset{(a)}{\geq} \frac{1 + 3p^2(1-p) -\epsilon}{1-p^3-(1-p)^3}\\
&\overset{(b)}{\geq} \inf_{q \in [\alpha_3, \beta_3]} \left\{  \frac{1+3q^2(1-q)-\epsilon}{1-q^3-(1-q)^3} \right\}  
\end{align*}
where (a) holds because if $F[1]=2$ then $Z\geq 2$; (b) holds because $p \in [\alpha_3, \beta_3]$. 
This holds for all $\epsilon \in (0,1/2]$ and so 
\begin{align*}
z_3^* &\geq \inf_{q \in [\alpha_3, \beta_3]} \left\{  \frac{1+3q^2(1-q)}{1-q^3-(1-q)^3} \right\}  \\
&\geq \inf_{q \in (0,1)} \left\{  \frac{1+3q^2(1-q)}{1-q^3-(1-q)^3} \right\}  \\
\end{align*}
The right-hand-side is the definition of $z_3$ in \eqref{eq:z3}.  Thus, $z_3^*=z_3$.

\subsection{Case $n=4$} 

Suppose  there are $4$ users.  From \eqref{eq:further} we have 
\begin{align*}
z_4^* &\geq  \frac{1 + \expect{Z-1|F[1]=2}6p^2(1-p)^2 + \expect{Z-1|F[1]=3}4p^3(1-p) -\epsilon}{1-p^4-(1-p)^4}\\
&\geq  \frac{1 + \expect{Z-1|F[1]=2}6p^2(1-p)^2 + 4p^3(1-p) -\epsilon}{1-p^4-(1-p)^4}
\end{align*}
where the final inequality uses the fact $Z\geq 2$ whenever $F[1]=3$. 
The main challenge is to show the following inequality:
\begin{equation} \label{eq:to-show} 
\expect{Z|F[1]=2} \geq  3
\end{equation} 
The proof of \eqref{eq:to-show} is given in the next subsection.  Substituting \eqref{eq:to-show} into the previous inequality gives: 
\begin{align*}
z_4^* &\geq   \frac{1 + 12p^2(1-p)^2 + 4p^3(1-p) -\epsilon}{1-p^4-(1-p)^4}\\
&\geq \inf_{q \in [\alpha_4, \beta_4]}\left\{    \frac{1 + 12q^2(1-q)^2 + 4q^3(1-q) -\epsilon}{1-q^4-(1-q)^4}\right\} 
\end{align*}
This holds for all $\epsilon \in (0, 1/2]$ and so 
\begin{align*}
z_4^* &\geq \inf_{q \in [\alpha_4, \beta_4]}\left\{    \frac{1 + 12q^2(1-q)^2 + 4q^3(1-q)}{1-q^4-(1-q)^4}\right\} \\
&\overset{(a)}{=} \inf_{q \in [\alpha_4, \beta_4]}\left\{    \frac{1 + \sum_{i=2}^3\min\{z_i, z_{4-i}\}{4 \choose i} q^i(1-q)^{4-i}}{1-q^4-(1-q)^4}\right\} \\
&\geq \inf_{q \in (0, 1)}\left\{    \frac{1 + \sum_{i=2}^3\min\{z_i, z_{4-i}\}{4 \choose i} q^i(1-q)^{4-i}}{1-q^4-(1-q)^4}\right\} 
\end{align*}
where (a) holds because $\min\{z_2, z_2\} = z_2=2$ and $\min\{z_3, z_1\}=z_1=1$.  This lower bound is the same as $z_4$ given in \eqref{eq:zk}. Thus $z_4^*=z_4$. 

\subsection{Proving  \eqref{eq:to-show}} 

It remains to prove \eqref{eq:to-show}:  We have 4 users.  Consider the situation and the end of slot $1$ given that $F[1]=2$.  Let $Z-1$ denote the remaining time until the first success. We shall call this the \emph{2-group situation}: Group A consists of the 2 users who transmitted on the first slot and Group B consists of the 2 users who did not.  We want to show the conditional expected remaining time, given this 2-group situation, is at least 2.  Let $ALG_A$ be an algorithm that the two users in group A independently implement for the remaining slots;  let $ALG_B$ be an algorithm that the two users in group B independently implement for the remaining slots.  Define 
$Time(ALG_A, ALG_B)$ as the expected remaining time (not including the first slot $t=1$) 
to see the first successful packet (from either group).  We want to show that, regardless of the algorithms $ALG_A$ and $ALG_B$, we have 
$$ Time(ALG_A, ALG_B) \geq 2 $$ 

We construct a new system with only two devices, called \emph{virtual devices}, with 
enhanced capabilities.   We show: 
\begin{enumerate} 
\item The new system can emulate the 2-group situation. 
\item Any such emulation in the new system must take at least 2 slots on average. 
\end{enumerate} 
For simplicity we shift the timeline so that the current slot $2$ is now called slot $1$. 
The two virtual devices act independently as follows: On each slot $t \in \{1, 2, 3, \ldots\}$, each virtual device can choose to send any integer number of packets. 
The feedback at the end of slot $t$ is $F[t] \in \{0, 1, 2, \ldots\}$, which is the sum number of packets sent by both virtual devices.   Consider the constraint that both devices must independently implement the same algorithm. Let $R$ be the random time to see the first success, being a slot where exactly one of the devices sends exactly one packet.
Let $r^*$ be the infimum value of $\expect{R}$ over all possible algorithms with this structure. 

\begin{lem} \label{lem:enhanced2} In the situation with two enhanced virtual devices, 
the minimum expected time for the first success is $r^*=2$. 
\end{lem}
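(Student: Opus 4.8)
The statement has two halves, $r^* \le 2$ and $r^* \ge 2$, and I would prove each. For $r^* \le 2$ it is enough to exhibit one algorithm meeting the bound: have each virtual device, independently on every slot, send a single packet with probability $1/2$ and no packet with probability $1/2$. A slot is a success (exactly one packet sent in total, i.e. $F[t]=1$) precisely when one device sends its packet and the other does not, which happens with probability $1/2$; otherwise $F[t]\in\{0,2\}$, which leaves the two devices with identical histories and conveys no distinguishing information, so the situation repeats. Hence the first-success time is geometric with mean $2$, and $r^*\le 2$.

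For the converse I would condition on slot $1$. Write $q_a$ for the common probability that a device sends $a\ge 0$ packets on slot $1$ (we may assume a device always sends finitely many packets, since infinitely many transmissions can never yield a success). Classify the unordered pair $\{N_1[1],N_2[1]\}$ into three types: the success outcome $\{0,1\}$, occurring with probability $\tau:=2q_0q_1$ and giving $R=1$; the symmetric outcomes $\{a,a\}$, occurring with total probability $\sigma:=\sum_{a\ge 0}q_a^2$; and all remaining outcomes. A symmetric outcome produces feedback $F[1]=2a\ne 1$, so no success, and leaves the two devices with identical post-slot-$1$ histories and independent fresh randomness for slots $2,3,\dots$; the continuation is therefore itself a legitimate two-device algorithm, so its expected time to the first success is at least $r^*$ (this is the analogue of the footnote in the converse of Theorem~\ref{thm:self-score}), giving conditional expected total time at least $1+r^*$. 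Every remaining outcome $\{a,b\}$ (with $a\ne b$ and $\{a,b\}\ne\{0,1\}$) has $F[1]=a+b\ne 1$, so again no success on slot $1$ and the conditional expected total time is at least $2$. Combining, for every two-device algorithm,
\[
\expect{R}\;\ge\;\tau\cdot 1+\sigma\,(1+r^*)+(1-\sigma-\tau)\cdot 2\;=\;2+\sigma(r^*-1)-\tau .
\]

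Finally I would invoke two elementary facts: $\tau=2q_0q_1\le q_0^2+q_1^2\le\sigma$, and $\tau\le\sum_{a<b}2q_aq_b=1-\sigma$. The first yields $\expect{R}\ge 2-\sigma(2-r^*)$ and the second yields $\expect{R}\ge 1+\sigma r^*$; both hold simultaneously, so $\expect{R}\ge\max\{\,2-\sigma(2-r^*),\ 1+\sigma r^*\,\}$ for every algorithm. If $r^*\ge 2$ the first term is already $\ge 2$ and the bound $r^*\ge 2$ follows at once. If instead $r^*<2$, then on $\sigma\in[0,1]$ the first expression is decreasing and the second is increasing, and they cross at $\sigma=1/2$ with common value $1+r^*/2$; hence the pointwise maximum is $\ge 1+r^*/2$ for all $\sigma$, so $\expect{R}\ge 1+r^*/2$ for every algorithm. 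Taking the infimum over algorithms gives $r^*\ge 1+r^*/2$, i.e. $r^*\ge 2$, contradicting $r^*<2$. Therefore $r^*\ge 2$ in all cases, and together with the first half $r^*=2$.

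The step I expect to be the crux is justifying $\expect{R\mid\{a,a\}}\ge 1+r^*$: one must argue precisely that the portion of the algorithm executed after a symmetric first slot is a bona fide stand-alone two-device algorithm --- the same deterministic decision rules at both devices, independent leftover randomness, and, crucially, both devices carrying the identical post-slot-$1$ history --- so that the infimum $r^*$ genuinely applies to it. Everything else (the case split on slot $1$, the two elementary bounds on $\tau$, and the self-referential inequality $r^*\ge\min\{2,\,1+r^*/2\}$) is routine arithmetic, but the argument must be arranged as above so that no circular use is made of the as-yet-unknown value of $r^*$.
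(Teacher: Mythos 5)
Your proof is correct, and its core is the same as the paper's: the achievability half is identical (each device sends $0$ or $1$ packet equally likely, geometric with mean $2$), and your three-way classification of the first slot into success ($\tau=2q_0q_1$), symmetric outcomes ($\sigma=\sum_a q_a^2$), and the remainder, together with the conditional bounds $1$, $1+r^*$, and $2$, is exactly the paper's partition into events $B$, $A$, $C$ with its inequalities \eqref{eq:main}--\eqref{eq:main3}. You have also correctly identified the crux: justifying $\expect{R\mid\{a,a\}}\ge 1+r^*$ by arguing that the continuation after a symmetric slot is itself a legitimate identical-and-independent two-device algorithm (the paper asserts this in one line). Where you genuinely diverge is in how the resulting inequality $\expect{R}\ge 2+\sigma(r^*-1)-\tau$ is closed out. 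The paper works with an $\epsilon$-near-optimal algorithm, divides by $1-P[A]$ (which forces it to first establish $1-P[A]\ge 1/4$ using $r^*\le e$ resp.\ $r^*\le 2$), exposes the perfect square $P[A]-P[B]\ge (a_0-a_1)^2\ge 0$, and lets $\epsilon\to 0$. You instead avoid division and the $\epsilon$-approximation entirely: the two elementary bounds $\tau\le\sigma$ and $\tau\le 1-\sigma$ give $\expect{R}\ge\max\{2-\sigma(2-r^*),\,1+\sigma r^*\}$ for \emph{every} algorithm, and the crossing of a decreasing and an increasing function of $\sigma$ at $\sigma=1/2$ yields the self-referential inequality $r^*\ge 1+r^*/2$. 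Your finish is arguably cleaner and more robust (no need to pre-bound $1-P[A]$ away from zero, no limit in $\epsilon$); the paper's finish makes the mechanism more transparent (the slack is exactly $(a_0-a_1)^2$, showing that near-optimality forces $a_0\approx a_1$) and is the template it reuses verbatim for the three-virtual-device case in the $n=6$ proof. One small point to keep explicit in your write-up: the step $\sum_{a<b}2q_aq_b=1-\sigma$ uses $\sum_a q_a=1$, i.e., that each device sends a finite number of packets almost surely, which you do flag.
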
 

\begin{proof} 
It is clear that $r^*\leq 2$ because one particular algorithm is to have each virtual device independently send either 0 packets or 1 packet every slot, equally likely, and this achieves an expected time of exactly 2.   We now show $r^*\geq 2$. 
Fix $\epsilon$ such that $0<\epsilon\leq 1/2$.  Consider an algorithm that is independently and identically
implemented on both enhanced devices that achieves $\expect{R} \leq r^*+\epsilon$, where $R$ is the random time until the first success.  For 
each $i \in \{0, 1, 2, \ldots\}$, let $a_i$ be the probability that a device sends $i$ packets on the very first slot.  Partition the sample space into the following disjoint events: 
\begin{align*}
A &= \{\mbox{Both algorithms send the same number of packets on the first slot}\}\\
B &= \{\mbox{Success on the first slot}\}\\
C &= (A \cup B)^c
\end{align*}
Then  
\begin{align}
P[A] &= \sum_{i=0}^{\infty} a_i^2\label{eq:padef} \\
P[B] &= 2a_0a_1 \label{eq:pbdef} \\
P[C] &= 1-P[A]-P[B] \nonumber
\end{align}
Observe that 
\begin{align}
\expect{R|A} &\geq 1+r^*\label{eq:main} \\
\expect{R|B} &=1 \label{eq:main2} \\
\expect{R|C}&\geq 2 \label{eq:main3} 
\end{align}
where \eqref{eq:main} holds because the event $A$ uses up one slot and leaves the system in the same state at the end of the first slot (so the remaining expected time is at least $r^*$); \eqref{eq:main2} holds because event $B$ means we are finished in one slot;   \eqref{eq:main3} holds because the event $C$ uses up one slot with no success, and the remaining time is at least one more slot. 
We have 
\begin{align*}
r^*+\epsilon &\geq \expect{R} \\
&= \expect{R|A}P[A] + \expect{R|B}P[B] + \expect{R|C}P[C]\\
&\geq (1+r^*)P[A] + P[B] + 2P[C]\\
&= r^*P[A] + 1 + P[C]
\end{align*}
where the second inequality uses \eqref{eq:main}-\eqref{eq:main3}. 
Thus 
\begin{equation} \label{eq:use-z} 
r^*(1-P[A])   \geq  1 + P[C]-\epsilon
\end{equation} 
Since $r^*\leq 2$ we obtain 
$$ 2(1-P[A])   \geq 1 + P[C] - \epsilon $$
So 
\begin{align*}
1-P[A] &\geq \frac{1+P[C] - \epsilon}{2}\\
&\geq \frac{1+P[C] - 1/2}{2}
\end{align*}
where the final inequality holds because $\epsilon \leq 1/2$. 
Since $P[C]\geq 0$ we obtain 
\begin{equation} \label{eq:one-minus-p}
1-P[A] \geq 1/4
\end{equation} 
Since $1-P[A]>0$ we can rearrange \eqref{eq:use-z}  to yield
\begin{align*}
r^*&\geq \frac{1+P[C] -\epsilon}{1-P[A]} \\
&\overset{(a)}{=} 2 + \frac{P[A]-P[B] - \epsilon}{1-P[A]} \\
&\overset{(b)}{\geq} 2 + \frac{a_0^2+a_1^2- 2a_0a_1-\epsilon}{1-P[A]} \\
&= 2 +  \frac{(a_0-a_1)^2 - \epsilon}{1-P[A]} \\
&\geq 2 - \frac{\epsilon}{1-P[A]}\\
&\overset{(c)}{\geq}2 - \frac{\epsilon}{1/4} 
\end{align*}
where (a) holds because $P[A] + P[B] + P[C]=1$; (b) holds by \eqref{eq:padef}-\eqref{eq:pbdef}; (c) holds by \eqref{eq:one-minus-p}.  Thus
$$ r^*\geq 2 - 4\epsilon $$
This holds for all $\epsilon$ that satisfy $0<\epsilon \leq 1/2$.  Taking a limit as $\epsilon\rightarrow 0$ yields $r^*\geq 2$.
\end{proof} 

We  now show that these two virtual devices can emulate the $k=4$ user scenario with two groups $A$ and $B$. Let $ALG_A$ and $ALG_B$ be two particular algorithms that are independently implemented for (actual) users in Group A and Group B, respectively.  The first enhanced device runs two separate programs: One that emulates an independent user implementing $ALG_A$, the second independently implementing $ALG_B$ as if it is a separate user.  If both $ALG_A$ and $ALG_B$ at this device decide to transmit on the current slot, the device sends two packets.  If only one of $ALG_A$ and $ALG_B$ decide to transmit, the device sends 1 packet.  If neither $ALG_A$ nor $ALG_B$ decide to transmit, the device sends zero packets.  The second device does a similar emulation independently.  The feedback signaling $F[t]$ on each slot $t$ is the same as if there were 4 users that were initially configured in the 2-group scenario.  Hence, the expected time to achieve the first success is the same as $Time(ALG_A, ALG_B)$ and, since this is also a situation where two enhanced devices independently implement the same algorithm, we have 
$$ Time(ALG_A, ALG_B) \geq r^*=2$$
where $r^*=2$ follows from the previous lemma. This proves \eqref{eq:to-show}. 

\subsection{The case $n=6$} 

Suppose there are 6 users. From \eqref{eq:further} we have 
\begin{align}
z_6^*  \geq \frac{1 +\sum_{i=2}^{5} \expect{Z-1|F[1]=i}{6 \choose i} p^i(1-p)^{6-i} - \epsilon}{1-p^6 - (1-p)^6} \label{eq:further2}
\end{align}
We now claim: 
\begin{align}
\expect{Z-1|F[1]=2} &\geq z_2 = \min\{z_2, z_4\} \label{eq:six3}\\
\expect{Z-1|F[1]=3} &\geq z_3 = \min\{z_3, z_3\} \label{eq:six4}\\
\expect{Z-1|F[1]=4} &\geq z_2 = \min\{z_2, z_4\} \label{eq:six5}
\end{align}
The claims \eqref{eq:six3}-\eqref{eq:six5} are nontrivial and shall be proven later. 
Utilizing these in the previous inequality gives
\begin{align*}
z_6^*&\geq \frac{1  + \sum_{i=2}^5 \min\{z_i, z_{6-i}\}{6 \choose i}p^i(1-p)^{6-i} - \epsilon}{1-p^6 - (1-p)^6}\\
&\geq \inf_{q \in [\alpha_6, \beta_6]}\left\{ \frac{1  + \sum_{i=2}^5 \min\{z_i, z_{6-i}\}{6 \choose i}q^i(1-q)^{6-i} - \epsilon}{1-q^6 - (1-q)^6}\right\}
\end{align*}
This holds for all $\epsilon \in (0, 1/2]$ and so 
\begin{align*}
z_6^*&\geq \inf_{q \in [\alpha_6, \beta_6]}\left\{ \frac{1  + \sum_{i=2}^5 \min\{z_i, z_{6-i}\}{6 \choose i}q^i(1-q)^{6-i} }{1-q^6 - (1-q)^6}\right\}\\
&\geq  \inf_{q \in (0,1)}\left\{ \frac{1  + \sum_{i=2}^5 \min\{z_i, z_{6-i}\}{6 \choose i}q^i(1-q)^{6-i} }{1-q^6 - (1-q)^6}\right\}\\
\end{align*}
This lower bound is the same as $z_6$ by \eqref{eq:zk}. Thus, $z_6^*=z_6$. 

\subsection{Proving \eqref{eq:six3} and \eqref{eq:six5}} 

The given information $F[1]=2$ creates a system of two groups $A$ and $B$, where group $A$ has 2 users, group $B$ has 4 users, the users know their group and know the total number of users in their group, but cannot distinguish themselves from others in their group.  The situation $F[1]=4$ is identical and so it suffices to prove \eqref{eq:six3}.  In this situation, it suffices to prove that the expected remaining time is at least $z_2^*=2$. 
This situation can be emulated by a 2-enhanced device system where the first device emulates the first user in group $A$ and two users in group $B$, while the second device emulates the second user in group $A$ and the other two users in group $B$.  We already know that any such emulation requires at least $2$ slots (Lemma \ref{lem:enhanced2}), 
and so the expected remaining time in the 6-user system, given $F[1]=2$, is at least 2.  Since $z_2=2$, this proves \eqref{eq:six3}. 

\subsection{Proving \eqref{eq:six4}} 

Suppose there are 6 users and $F[1]=3$. This information creates two groups at the start of the second slot,  Group $A$ and Group $B$, each with three users.  Let $r^*$ represent the infimum expected remaining time to see the first success in this situation.   
Clearly $r^*\leq z_3$ because we can simply throw away Group B and implement the optimal algorithm for 3 users.  We want to show the reverse inequality $r^*\geq z_3$.  

This situation can be emulated by 3 virtual devices with the enhanced capability of sending either 0, 1, or 2 packets on every slot.   Let $y^*$ represent the infimum expected time to see the first success when these 3 virtual users independently implement the same algorithm (the infimum is taken over all possible algorithms).  First note that $y^*\leq r^*$
because the three virtual devices can emulate the two groups of actual users as follows: Consider any algorithms $ALG_A$ and $ALG_B$ that are independently used by (actual) users in Groups A and B, respectively.  Similar to the $k=4$ situation, 
each virtual device can simulate one actual user implementing $ALG_A$  and 
one actual user independently implementing $ALG_B$. As before, on any slot when both the  $ALG_A$ and $ALG_B$ user transmit, that virtual device sends 2 packets; on any slot when $ALG_A$ sends but $ALG_B$ does not, or vice versa, the device sends 1 packet; on any slot when neither $ALG_A$ nor $ALG_B$ sends, the device sends 0 packets. The time to first success in this emulation is the same as the time to first success in the actual situation, and so $y^*\leq r^*$. 

It remains to show $y^*\geq z_3$ (which implies the desired inequality $r^*\geq z_3$). 
Clearly $1\leq y^*\leq z_3\leq 2$. 
Fix $\epsilon \in (0,1/4)$. Consider an algorithm that is independently 
implemented on the system of 3 virtual users with enhanced capabilities that yields 
\begin{equation} \label{eq:y-approx}
 y^* \leq \expect{Z}\leq y^* + \epsilon
 \end{equation} 
where $Z$ is the random time to the first success.   Let $a,b,c$ be the probabilities that a single device sends 0, 1, or 2 packets, respectively, on the first slot.  Partition the sample space into the following disjoint events  $A, B, C$: 
\begin{align*}
A &= \{\mbox{All 3 users send the same number of packets}\}\\
B &= \{\mbox{Success on first slot}\} \\
C &= (A \cup B)^c 
\end{align*}
Then 
\begin{align*}
&P[A] = a^3+b^3 + c^3\\
&P[B] = 3ba^2\\
&P[A]+P[B] + P[C] = 1
\end{align*}
Furthermore
\begin{align*}
\expect{Z|A} &\geq 1+y^*\\
\expect{Z|B}&=1\\
\expect{Z|C}&\geq 2
\end{align*}
Thus, starting with \eqref{eq:y-approx} gives 
\begin{align}
y^* + \epsilon &\geq \expect{Z} \nonumber \\
&= \expect{Z|A}P[A]  + \expect{Z|B}P[B] + \expect{Z|C}P[C] \nonumber\\
&\geq (1+y^*)P[A] + P[B]+ 2P[C] \label{eq:starbucks}
\end{align}
In particular 
$$ (1+y^*)P[A] \leq y^* + \epsilon \leq y^* + 1/4$$
and so 
$$ P[A] \leq \frac{y^*+1/4}{1+y^*} \leq \frac{2 + 1/4}{1+2} = 3/4$$
where we have used the fact that $(y+1/4)/(1+y)$ is increasing on the domain $y>0$ and $y^*\leq 2$. Thus $P[A] \leq 3/4$. 

Also from \eqref{eq:starbucks}:
\begin{align*}
y^*&\geq \frac{P[A] + P[B] + 2P[C]-\epsilon}{1-P[A]} \\
&= 1 + \frac{1-P[B]-\epsilon}{1 -P[A]} \\
&\overset{(a)}{\geq} 1 + \frac{1-P[B]}{1-P[A]} - 4\epsilon \\
&=1 + \frac{1-3ba^2}{1-a^3-b^3-c^3} - 4\epsilon\\
&= 1 + \frac{1-3(1-a-c)a^2}{1-a^3-(1-a-c)^3-c^3} - 4\epsilon\\
&\geq 1 + \inf_{(a,c) \in \Lambda}\left\{\frac{1-3(1-a-c)a^2}{1-a^3-(1-a-c)^3-c^3} \right\} - 4\epsilon
\end{align*}
where  (a) holds because $1-P[A] \geq 1/4$ and where $\Lambda$ is defined 
$$ \Lambda = \{(a,c) \in \mathbb{R}^2 : a\geq 0, c\geq 0, a+c\leq 1, a^3+(1-a-c)^3+c^3\leq 3/4\}$$
This holds for all $\epsilon \in (0,1/4)$. Taking a limit as $\epsilon \rightarrow 0$ gives 
$$ y^* \geq 1 + \inf_{(a,c) \in \Lambda}\left\{\frac{1-3(1-a-c)a^2}{1-a^3-(1-a-c)^3-c^3} \right\}$$
It can be shown that  
 the infimum of this 2-variable problem is achieved at a point $(a^*, c^*) \in \Lambda$ such 
 that $0.5\leq a^*\leq 1$.   Define the function $f:[0, 1-a^*]\rightarrow \mathbb{R}$ by 
 $$ f(c) = 1 - (a^*)^3 - (1-a^*-c)^3 - c^3$$
 By concavity of $f$ we have 
 $$ f(c^*) \leq f(0) + f'(0)c^*$$
 and so 
 \begin{equation} \label{eq:uses}
  f(c^*)\leq 1 - (a^*)^3 - (1 - a^*)^3 + 3(1-a^*)^2c^*
  \end{equation} 
  and so 
 \begin{align}
 y^* &\geq 1 + \frac{1-3(1-a^*-c^*)(a^*)^2}{1-(a^*)^3-(1-a^*-c^*)^3-(c^*)^3} \nonumber \\
 &\geq 1 + \frac{1-3(1-a^*-c^*)(a^*)^2}{1 - (a^*)^3 - (1 - a^*)^3 + 3(1-a^*)^2c^*} \label{eq:chari} 
 \end{align}
 where the final inequality uses \eqref{eq:uses} and the fact that the numerator is nonnegative. 
 For a fixed $a^* \in [0.5, 1]$, the right-hand-side of \eqref{eq:chari} is a nondecreasing function
 of $c^*$ and so it is greater than or equal to its corresponding value when $c^*$ is replaced by $0$: 
 \begin{align*}
 y^*&\geq 1 + \frac{1-3(1-a^*)(a^*)^2}{1-(a^*)^3 - (1-a^*)^3}\\
 &\geq 1 + \inf_{a \in (0,1)} \left\{\frac{1 - 3(1-a)a^2}{1 - a^3 - (1-a)^3 }\right\} \\
&= \inf_{a \in (0,1)} \left\{\frac{1 + 3a(1-a)^2}{1 - a^3 - (1-a)^3 }\right\} = z_3
\end{align*}

\section{Multiple channels} \label{section:extension} 


This section briefly explores the problem of minimizing the expected time to first capture when there are $n$ users and $m$ orthogonal channels.  Time is slotted and each channel can support one packet transmission per slot.  
Each user can transmit on multiple channels simultaneously. 
On each slot, each user chooses a subset of the $m$ channels over which to transmit.   The feedback at the end of each slot $t$ is the vector $F[t]=(F_1[t], \ldots, F_m[t])$ where $F_i[t] \in \{0,1,2, \ldots, n\}$ represents the number of transmitters on channel $i$.  If $F_i[t]=0$ then channel $i$ was idle; if $F_i[t]=1$ then channel $i$ had a success; if $F_i[t]=j$ for some $j \in \{2, \ldots, n\}$ 
then channel $i$ experienced a collision of $j$ packets.   Let $Z \in \{1, 2, 3, \ldots\}$ be the random number of slots until at least one channel has a success.  The goal is to minimize $\expect{Z}$. 
Is it optimal to use all channels separately and independently?   The  answer is different when $n=2$ versus $n=3$.

\subsection{Two users} 

Suppose there are two users and $m$ orthogonal channels.  One transmission
strategy is to have each user transmit with probability $1/2$ independently over each channel and each slot. Under this strategy, the probability of having at least one success on a given slot is $1-(1/2)^m$ and so $Z$ is geometrically distributed with
$$ \expect{Z} = \frac{1}{1-(1/2)^m}$$

We now show this strategy is optimal under the assumption that both users are indistinguishable and independently implement identical policies.  Let $z^*$ denote the infimum expected capture time over all such policies. Fix $\epsilon\in (0, 1)$. Consider a policy that is independently implemented by both users that achieves an expected capture time within $\epsilon$ of the optimal $z^*$.  Let $Z$ denote the random time to first success under this policy, so
\begin{equation} \label{eq:baz1} 
\expect{Z}\leq z^*+\epsilon
\end{equation} 
 Enumerate the $2^m$ subsets of the $m$ channels and let $p_i$ denote the probability that a user chooses subset $i \in \{1, \ldots, 2^m\}$ for transmission on the first slot (under this particular policy). Observe that slot 1 has at least one success if and only if both users choose different subsets, and so
$$ \theta = P[\mbox{Success in slot 1}] = 1 - \sum_{i=1}^{2^m}p_i^2$$
On the other hand, if both users choose identical subsets on slot 1 then no information that can distinguish the users is gained and the expected remaining time as at least $z^*$. 
Thus
\begin{equation} \label{eq:baz2}
\expect{Z} \geq 1 + (1-\theta)z^*
\end{equation} 
Combining \eqref{eq:baz1} and \eqref{eq:baz2} gives
$$z^* +\epsilon\geq 1 + (1-\theta)z^*$$
and so 
\begin{equation} \label{eq:and-so}
z^* \geq \frac{1-\epsilon}{\theta} \geq \frac{1-\epsilon}{\theta^*}
\end{equation} 
where $\theta^*$ is the optimal objective value for the following convex optimization problem
\begin{align*}
\mbox{Maximize:} \quad1- \sum_{i=1}^{2^m}q_i^2 \\
\mbox{Subject to:} \quad \sum_{i=1}^{2^m}q_i=1
\end{align*}
Using a Lagrange multiplier argument, it is not difficult to show the solution to this problem is a uniform distribution, so  $q_i^*= \frac{1}{2^m}$ for all $i \in \{1, \ldots, 2^m\}$. Hence,  $\theta^*= 1-(1/2)^m$. Substituting this value of $\theta^*$ into \eqref{eq:and-so} gives
$$ z^* \geq \frac{1-\epsilon}{1-(1/2)^m}$$
This holds for all $\epsilon \in (0,1)$ and so 
$z^* \geq \frac{1}{1-(1/2)^m}$.  

\subsection{Three users}

Suppose there are three users and $m$ channels.  Each user independently implements the same policy for choosing subsets of channels over which to transmit on each slot.  For a particular policy, let $\beta$ be the probability that all three users choose the same subset for transmission on the first slot. Let $\theta$ be the probability that, on the first slot, no channel has exactly one transmitter  but at least one channel has exactly 2 transmitters.  Let $A$ be the set of all $(\beta, \theta)$ vectors that can be achieved (considering all possible policies).  
Define $z^*$ as the infimum expected time to first capture. 
Using an argument similar to the single channel case treated in Section \ref{section:k3}, it can be shown that 
$$ z^* = \inf_{(\beta, \theta) \in A} \left\{ \frac{1+\theta}{1-\beta}\right\}$$

It is not trivial to optimize over all $(\beta, \theta)$ values in the set $A$.  For simplicity, consider the special case with just $m=2$ channels. Let $p$ be the probability that each user transmits on channel 1 on the first slot.  Let $q$ be the conditional 
probability of transmitting on the second channel on the first slot, given the user transmitted on the first channel.  Let $r$ be the conditional probability of transmitting on the second channel on the first slot, given the user did \emph{not} transmit on the first channel.  Then
\begin{align*}
\beta &= p^3(q^3 + (1-q)^3) + (1-p)^3(r^3 + (1-r)^3)\\
\theta &= p^3(3q^2(1-q)) + (1-p)^3(3r^2(1-r)) + 3p^2(1-p)(1  - 2q(1-q)(1-r) - r(1-q)^2)
\end{align*}
Using Matlab to minimize $(1+\theta)/(1-\beta)$ over $p,q, r \in [0,1]$ gives
$$ p^*=0.5, q^*=0, r^*=1, z^* = 1.3333$$
This policy does not treat each channel separately and independently: On the first slot it is optimal to transmit over \emph{either} channel 1 or channel 2, but not both. Thus, unlike the case with only two users, it is not desirable to have a single user transmit over both channels simultaneously. 

It is instructive to compare this to the policy that  independently implements the min capture time policy   of Section \ref{section:first-capture} separately over both of the two channels.  More generally, one can minimize $(1+\theta)/(1-\beta)$ over all $(\beta, \theta)$ values achievable by policies that independently use the channels with the same transmit probability $p$.  Under this more restrictive class of policies we have 
\begin{align*}
\beta &= (p^3+(1-p)^3)^2\\
\theta &= (p^3 + (1-p)^3)(3p^2(1-p)) + 3p^2(1-p)(1 - 3p(1-p)^2) 
\end{align*}
Numerically minimizing $(1+\theta)/(1-\beta)$ over $p \in [0,1]$ gives 
$$ \tilde{p} = 0.360882 , \tilde{z} = 1.34373$$
which is strictly worse than $z^*=1.3333$. 

\section{Conclusion} 

A MAC game was introduced. Unlike related prisoner dilemma games where Tit-for-Tat policies tend to win, it was shown that a 4-State policy that has a randomized initial phase consistently won each competition.  The policy 4-State (and a closely related policy 3-State) was shown to maximize the expected number of points when competing against an independent version of itself.   A closely related problem of minimizing the expected time required to first capture the channel was explored and an optimal algorithm was developed for the special case when the number of users $n$ is in the set $\{1, 2, 3, 4, 6\}$.  The optimality proof uses a technique that introduces virtual users with enhanced capabilities. An efficient policy was developed for any positive integer $n$.  Whether or not this policy is optimal for all $n$  is left as an open question.  Finally, the complexities of the multi-channel case were explored.  It was shown that, for the context of minimizing the expected time to first capture, it is optimal to independently use all channels when the number of users is only two.  However, a counter-example was shown for the case of three users and two channels: Optimality requires each user to correlate its transmission decisions over each channel, rather than independently use each channel.

\bibliographystyle{unsrt}
\bibliography{../../../../../../latex-mit/bibliography/refs}
\end{document}